 \documentclass[10pt, singlecolumn, twoside, romanappendices]{IEEEtran}

\IEEEoverridecommandlockouts

\usepackage{graphicx,cite,acronym,amsmath,amssymb,amsfonts,color,subfigure,floatflt,stfloats,bm,textgreek,multirow,amsthm}
\usepackage{epsfig,epstopdf,psfrag}

\usepackage[table]{xcolor}
\usepackage{bm}

\usepackage[cmintegrals]{newtxmath}
\usepackage{graphics,hhline,array,cite,bbm}

\usepackage{latexsym}
\usepackage{times}
\usepackage{makeidx}

\usepackage{mathtools}
\usepackage{graphicx}
\usepackage{hyperref}
\usepackage{xcolor}
\usepackage{enumitem}

\usepackage{tikz}                    
\usetikzlibrary{arrows,positioning}  

\DeclareGraphicsExtensions{.png,.jpg,.eps}

\acrodef{ESPAR}{electrically steerable passive array radiator}

\acrodef{RA}{reconfigurable antenna}

\acrodef{SIM}{stacked intelligent metasurface}

\acrodef{SE}{spectral efficiency}

\acrodef{DAC}{digital-to-analog converter}

\acrodef{DMA}{dynamic metasurface antenna}

\acrodef{SINR}{signal-to-interference noise ratio}

\acrodef{ESIT}{electromagnetic signal and information theory} 

\acrodef{ELAA}{extremely large antenna arrays} 

\acrodef{DSA}{dynamic scattering array}

\acrodef{ULA}{uniform linear array}

\acrodef{UCA}{uniform circolar array}

\acrodef{IIoT}{industrial Internet-of-things}

\acrodef{IT}{information theory}

\acrodef{SRE}{smart radio environment}

\acrodef{EMO}{electromagnetic object}

\acrodef{SVD}{singular value decomposition}

\acrodef{PSWF}{prolate spheroidal wave function}

\acrodef{CR}{channel response}

\acrodef{BS}{base station}

\acrodef{MS}{mobile station}

\acrodef{UE}{user equipment}

\acrodef{MIMO}{multiple-input multiple-output}

\acrodef{MISO}{multiple-input single-output}

\acrodef{RIS}{reconfigurable intelligent surface}

\acrodef{IRS}{intelligent reconfigurable surface}

\acrodef{LIS}{large intelligent surface}

\acrodef{MIS}{medium intelligent surface}

\acrodef{SIS}{small intelligent surface}

\acrodef{DoF}{degrees-of-freedom}

\acrodef{AF}{amplify \& forward}

\acrodef{DF}{detect \& forward}

\acrodef{JF}{just forward}

\acrodef{CSI}{channel state information}

\acrodef{RV}{random variable}

\acrodef{i.i.d.}{independent, identically distributed}

\acrodef{PSD}{power spectral density}

\acrodef{PDF}{probability distribution function}

\acrodef{CDF}{cumulative distribution function}

\acrodef{ch.f.}{characteristic function}

\acrodef{AWGN}{additive white Gaussian noise}

\acrodef{RSSI}{received signal strength indicator}

\acrodef{SNR}{signal-to-noise ratio}

\acrodef{LRT}{likelihood ratio test}

\acrodef{GLRT}{generalized likelihood ratio test}

\acrodef{GML}{generalized maximum likelihood}

\acrodef{LOS}{line-of-sight}

\acrodef{NLOS}{non-line-of-sight}

\acrodef{GDOP}{geometric dilution of precision}

\acrodef{GPS}{Global Positioning System}

\acrodef{FIM}{Fisher information matrix}

\acrodef{PEB}{position error bound}

\acrodef{WSN}{Wireless Sensor Network}

\acrodef{MAC}{medium access control}

\acrodef{RSS}{received signal strength}

\acrodef{RTT}{round-trip time}

\acrodef{MIMO}{multiple-input multiple-output}

\acrodef{MF}{matched filter}

\acrodef{ED}{energy detector}

\acrodef{ML}{maximum likelihood}

\acrodef{NL}{nonlinear}

\acrodef{MSE}{mean square error}

\acrodef{RMSE}{root mean square error}

\acrodef{ppm}{part-per-million}

\acrodef{PRP}{pulse repetition period}

\acrodef{ACK}{acknowledge}

\acrodef{UWB}{ultrawide bandwidth}

\acrodef{TNR}{threshold-to-noise ratio}

\acrodef{LOS}{line-of-sight}

\acrodef{LS}{least squares}

\acrodef{IR-UWB}{impulse radio UWB}

\acrodef{FCC}{Federal Communications Commission}

\acrodef{TH}{time-hopping}

\acrodef{PPM}{pulse position modulation}

\acrodef{PAM}{pulse amplitude modulation}

\acrodef{MUI}{multi-user interference}

\acrodef{PDP}{power delay profile}

\acrodef{PPP}{Poisson point process}

\acrodef{DS}{delay spread}

\acrodef{CED}{channel excess delay}

\acrodef{BPZF}{band-pass zonal filter}

\acrodef{SIR}{signal-to-interference ratio}

\acrodef{RFID}{radio frequency identification}

\acrodef{WPAN}{wireless personal area networks}

\acrodef{WWLB}{Weiss-Weinstein lower bound}

\acrodef{DP}{direct path}

\acrodef{MF}{matched filter}

\acrodef{MMSE}{minimum-mean-square-error}

\acrodef{SBS}{serial backward search}

\acrodef{NBI}{narrowband interference}

\acrodef{WBI}{wideband interference}

\acrodef{INR}{interference-to-noise ratio}

\acrodef{CIR}{channel impulse response}

\acrodef{ISI}{inter-symbol interference}

\acrodef{CPR}{channel pulse response}

\acrodef{LRT}{likelihood ratio test}

\acrodef{MUI}{multi-user interference}

\acrodef{EM}{electromagnetic}

\acrodef{CW}{continuous wave}

\acrodef{RF}{radiofrequency}

\acrodef{FCC}{Federal Communications Commission}

\acrodef{EIRP}{effective radiated isotropic power}

\acrodef{RCS}{radar cross section}

\acrodef{BAV}{balanced antipodal Vivaldi}

\acrodef{PRake}{partial Rake}

\acrodef{RTLS}{real time locating system}

\acrodef{CRB}{Cram\'{e}r-Rao bound}

\acrodef{ZZB}{Ziv-Zakai bound}

\acrodef{TOA}{time-of-arrival}

\acrodef{TOF}{time-of-flight}

\acrodef{WSN}{wireless sensor network}

\acrodef{MAC}{medium access control}

\acrodef{RSS}{received signal strength}

\acrodef{TDOA}{time difference-of-arrival}

\acrodef{RF}{radiofrequency}

\acrodef{RTT}{round-trip time}

\acrodef{AOA}{angle-of-arrival}

\acrodef{MF}{matched filter}

\acrodef{ED}{energy detector}

\acrodef{ML}{maximum likelihood}

\acrodef{MUR}{Multistatic radar}

\acrodef{HDSA}{high-definition situation-aware}

\acrodef{RRC}{root raised cosine}

\acrodef{OFDM}{orthogonal frequency division multiplexing}

\acrodef{IF}{intermediate frequency}

\acrodef{PHY}{physical layer}

\acrodef{S-V}{Saleh-Valenzuela}

\acrodef{UHF}{ultra-high frequency}

\acrodef{PR}{pseudo-random}

\acrodef{SoC}{System on Chip}

\acrodef{SoP}{System on Package}

\acrodef{SPMF}{Single-Path Matched Filter}

\acrodef{IMF}{Ideal Matched Filter}

\acrodef{SCR}{signal-to-clutter ratio}

\acrodef{BEP}{bit error probability}

\acrodef{BER}{bit error rate}

\acrodef{WSR}{wireless sensor radar}

\acrodef{HPBW}{half power beam width}

\acrodef{LEO}{localization error outage}

\acrodef{WSS}{wide-sense stationary}

\acrodef{TR}{time-reversal}

\acrodef{WSSUS}{WSS with uncorrelated scattering}

\acrodef{GP}{Gaussian process}

\acrodef{IMU}{inertial measurement unit}

\newtheorem{theorem}{Theorem}[section]
\newtheorem{lemma}[theorem]{Lemma}

\newtheorem{corollary}[theorem]{Corollary}
\newtheorem{definition}[theorem]{Definition}

\newcommand{\rank}[1]{{\rm rank} \left \{ #1 \right \}}
\newcommand{\diag}[1]{{\rm diag} \left \{ #1 \right \}}
\newcommand{\image}[1]{{\rm image} \left \{ #1 \right \}}

\newcommand{\Real}[1]{\Re \left \{ #1\right \}}
\newcommand{\Imag}[1]{\Im \left \{ #1\right \}}

\newcommand{\EX}[1] {{\mathbb{E}}\left\{{#1}\right\}}

\newcommand{\ctranspose}{^{\mathsf{H}}}
\newcommand{\transpose}{^{\mathsf{T}}}

\newcommand{\boldA} {{\bf{A}}}

\newcommand{\boldp} {{\bf{p}}}
\newcommand{\bolde} {{\bf{e}}}

\newcommand{\boldu} {{\bf{u}}}

\newcommand{\boldX} {{\bf{X}}}

\newcommand{\boldE} {{\bf{E}}}
\newcommand{\boldJ} {{\bf{J}}}

\newcommand{\boldF} {{\bf{F}}}

\newcommand{\boldR} {{\bf{R}}}

\newcommand{\boldZ} {{\bf{Z}}}

\newcommand{\boldI} {{\bf{I}}}
\newcommand{\boldr} {{\bf{r}}}

\newcommand{\boldy} {{\bf{y}}}

\newcommand{\boldi} {{\bf{i}}}
\newcommand{\boldv} {{\bf{v}}}

\newcommand{\Ncal} {\mathcal{N}}
\newcommand{\Mcal} {\mathcal{M}}

\newcommand{\Reals} {\mathbb{R}}
\newcommand{\Complex} {\mathbb{C}}
\newcommand{\Sphere} {\mathbb{S}}

\newcommand{\Prad} {P_{\text{rad}}}

\newcommand{\Pd} {P_{\text{d}}}

\newcommand{\etad} {\eta_{\text{d}}}

\newcommand{\Rr} {R_{\text{r}}}
\newcommand{\Rd} {R_{\text{d}}}
\newcommand{\bRd} {{\bf{R}_{\text{d}}}}
\newcommand{\bRt} {{\bf{R}_{\text{t}}}}

\newcommand{\Zl} {{\bf{Z}_{\text{L}}}}

\newcommand{\Ns} {N_{\text{s}}}

\newcommand{\Parameters} {\boldsymbol{\varphi}}
\newcommand{\Parameter} {\varphi}

\newcommand{\EMsymb} {\mathsf}

\newcommand{\Em} {{\EMsymb E}}

\newcommand{\Psim} {{\EMsymb \boldsymbol{\Psi}}}

\newcommand{\boldzero} {\boldsymbol{0}}

\newcommand{\JT} {\boldJ_{T}}
\newcommand{\JtT} {\boldJ_{\mathring{T}}}

\newcommand{\JTc} {\boldJ_{T}^{(\text{c})}}

\begin{document}
\title{Fundamental Theorems on Controllability in Wave-domain Processing for Holographic MIMO}

\author{
\IEEEauthorblockN{Davide~Dardari,~\IEEEmembership{Fellow,~IEEE}}
\IEEEcompsocitemizethanks{\IEEEcompsocthanksitem 
 D.~Dardari is with the 
   Dipartimento di Ingegneria dell'Energia Elettrica e dell'Informazione ``Guglielmo Marconi"  (DEI), WiLAB-CNIT, 
   University of Bologna, Cesena Campus, 
   Cesena (FC), Italy, (e-mail: davide.dardari@unibo.it). 
    }
}

\maketitle



\begin{abstract}

Wave-domain processing is an emerging paradigm where signal processing operations are partially shifted from the digital to the electromagnetic (EM) domain. Leveraging reconfigurable EM devices, this approach aims to reduce complexity, energy consumption, and latency in next-generation wireless systems employing holographic MIMO.
This paper establishes fundamental theorems on the controllability of generic reconfigurable EM devices, where wave processing is achieved through the dynamic configuration of passive scatterers. Specifically, we derive necessary and sufficient conditions for controllability as a function of geometry and mutual coupling between elements. Finally, we provide a detailed discussion and numerical results characterizing the interplay between the number of elements, physical size, degrees of freedom, and directivity.  


\end{abstract}

\begin{IEEEkeywords}
Wave-domain processing, Controllability,  Holographic MIMO, Dynamic Scattering Arrays, superdirectivity.
\end{IEEEkeywords}

\section{Introduction}

It is broadly acknowledged that meeting the demanding communication and sensing requirements of next-generation wireless networks will necessitate deploying highly dense and/or large-scale massive \ac{MIMO} antenna systems \cite{BjoChaHeaMarMezSanRusCasJunDem:24,Pre:J24,BjoEldLarLozPoo:23,YouCaiLiuDiRDumYen:25}. This evolution aims to fully harness the potential of the wireless channel, for example, by exploiting the additional \ac{DoF} available in the near-field region of antenna arrays \cite{Dar:J20,ZhaShlGuiDarEld:J23}.
Yet, massive antenna arrays also bring substantial challenges related to hardware complexity and energy consumption, raising concerns about the sustainability of future wireless infrastructures. In a fully digital transceiver design, every antenna element is paired with its own \ac{RF} chain, and all signal processing takes place in the digital domain. This setup inevitably constrains achievable data rates, energy efficiency, and latency, primarily due to the limitations of digital clock speeds and the sampling rates of ADCs and DACs.
To address these issues, hybrid analog–digital architectures have been extensively explored in \ac{MIMO} systems. By incorporating networks of phase shifters and combiners between the digital baseband, \ac{RF} chains, and antenna elements, these architectures significantly reduce the number of required \ac{RF} chains and the corresponding digital processing load. However, this reduction comes at the expense of system flexibility and reconfigurability \cite{AlkElALeuHea:14}.

Emerging technologies in reconfigurable antennas and metamaterials introduce a novel research avenue: migrating part of signal processing from the digital domain to the \ac{EM} domain (wave domain). A sustainable strategy involves distributing these tasks across digital, analog, and wave domains, forming what is termed the \emph{tri-hybrid} architecture \cite{HeaCarVikCasAkrCha:26, CasYanChaHea:26}.
Analog-domain computing, in particular, has seen renewed interest due to its energy-efficient and massively parallel capabilities. It leverages the mathematical equivalence between microwave network equations and those of common \ac{MIMO} tasks, like \ac{MMSE} estimation \cite{NerCle:25a}.
On the other hand, wave-domain processing delegates signal tasks directly to the \ac{EM} level, under the \ac{ESIT} paradigm \cite{DarTorPasDec:J26, ZhuWanDaiDebPoo:24,BjoChaHeaMarMezSanRusCasJunDem:24}.

Looking back, a key milestone in reconfigurable antennas is reactively controlled arrays, proposed by Harrington \cite{Har:78}. These use passive scatterers reconfigured via programmable reactive loads to alter radiation characteristics. This concept evolved into \acp{ESPAR} antennas, extensively studied for programmable radiation patterns featuring a single \ac{RF} chain surrounded by passive scatterers \cite{KalKanPap:13, BucJuaKamSib:20, HanZhaSheLiChiMur:21}.
More recently, advances in materials and technologies have enabled novel antenna structures tailored for next-generation wireless systems, offering higher performance, flexibility, wider frequency coverage, and lower costs \cite{GanLiuDiRJorGonShoCui:26}. This is often referred to as holographic \ac{MIMO} emerging as a key paradigm for \ac{EM}-domain processing \cite{GonGavJiHuaAleWeiZhaDebPooYue:24,DarDec:J21,GanLiuDiRJorGonShoCui:26}. 

Prominent examples include transmitarrays, fluid antennas, and \acp{RA} \cite{WonNewHaoTonCha:23, PriHarBlaKieKusFriPraMorSmi:04, RodCetJof:14, RonMarKan:22, ZhaRaoMinLiChiWonTonRos:24}. 
Another notable case is the \ac{DMA}, a traveling-wave antenna with reconfigurable subwavelength apertures. Requiring just one \ac{RF} chain per row in planar arrays, \acp{DMA} provide a flexible alternative with fewer \ac{RF} chains \cite{ShlAleImaYonSmi:21, ZhaShlGuiDarEld:J23}.
Among these technologies, \acp{SIM} have garnered particular attention \cite{ZeQiaXinWeiTie:24,AnYueGuaDiRDebPooHan:24,HasAnDiRDebYue:24,AnXuetAl5:25,SheUllSolAleCha:26,AnYueDirBenDebHan:26,FabTorDar:C25}. The \ac{SIM} concept draws from advances in optics and microwave circuits that perform machine learning via layered \ac{EM} surfaces mimicking deep neural networks for image processing \cite{ZeQiaXinWeiTie:24}. A typical \ac{SIM} comprises a sealed vacuum chamber housing stacked metasurface layers. The first layer is an active planar antenna array fed by multiple \ac{RF} chains. The \ac{EM} wave it generates propagates through subsequent layers of reconfigurable meta-atoms, with the final layer radiating the processed \ac{EM} field outward. By tuning the transmission properties of each meta-atom, the \ac{SIM} shapes the propagating \ac{EM} wave into customized waveforms. Its layered design supports backpropagation-like optimization. Optimization techniques for \acp{SIM} enabling diverse \ac{EM} wave processing tasks appear in several works such as \cite{AnYueGuaDiRDebPooHan:24,HasAnDiRDebYue:24,AnXuNgAleHuaYueHan:23,AnXuetAl5:25}. Nonlinear extensions are explored in \cite{FabTorDar:C25}.

While early studies employed simplified models for reconfigurable \ac{EM} devices, recent work uses rigorous multi-port circuit models for their balance of physical accuracy and computational tractability \cite{NerSheLiDiRCle:24,AbrBarToc:25,DesCasKhoDuVisHea:26, CasYanChaHea:26}.
Notably, most of these reconfigurable \ac{EM} technologies share a core principle: active elements interacting with tunable passive scatterers. This extends Harrington's original idea across varied geometries, element coupling, number of inputs, and implementations \cite{Har:78}.
Starting from the consideration mentioned above, a general framework for wave-domain processing, termed \ac{DSA}, accounting for multiple inputs, arbitrary geometries, and coupling effects has been recently proposed \cite{Dar:C24,Dar:J26}. A \ac{DSA} features a few active antenna elements, each linked to an \ac{RF} chain (input), surrounded by numerous reconfigurable passive scatterers. These interact in the reactive near-field, enabling joint \ac{EM} processing and ``over-the-air'' radiation. Devices like \acp{SIM}, \acp{RA}, and \acp{ESPAR} represent special cases. For example, a \ac{SIM} is a particular \ac{DSA} with a block lower-bidiagonal impedance matrix. Unlike \acp{SIM}, \acp{DSA} allow arbitrary scatterer distributions and couplings, supporting simultaneous processing and radiation for greater flexibility in compact designs \cite{Dar:J26}.
Hereafter, we use \ac{DSA} to encompass these devices (e.g., \acp{SIM}, \acp{RA}, \acp{ESPAR}) as special cases.

The literature on \acp{DSA} and related particular cases has focused on modeling and numerical optimization for specific functionalities. Theoretical performance bounds for wave-domain processing remain underexplored. In this direction, the author of \cite{Hou:26} proposes a framework to bound the fidelity of realizing a target operator via reconfigurable \ac{EM} systems. This bound is computed numerically and applied to maximize coupling in \ac{MIMO} systems using \acp{RIS} and \acp{DMA}.

To the best of the author's knowledge, no prior work addresses this question: \emph{Under what conditions is a reconfigurable \ac{EM} device controllable, that is, for any desired response, does there exist at least one configuration that realizes it?}
Verifying controllability is challenging. In fact, even though the \ac{EM} device's response is inherently linear, it depends nonlinearly on a very large number of configuration parameters.

\subsection{Our Contribution}

This paper addresses the aforementioned research question by establishing fundamental theorems on the controllability of generic reconfigurable \ac{EM} devices.

The first major contribution is the derivation of a fundamental theorem, based on differential geometry arguments, that provides a sufficient condition for the controllability of a generic reconfigurable linear \ac{EM} device, under mild assumptions regarding its transfer matrix. The strength of this theorem is that finding just one configuration that satisfies a specific expression guarantees full system controllability. Furthermore, this theorem extends well beyond the analysis of \acp{DSA}; its applicability to generic transfer functions makes it a broadly useful mathematical result.

The second contribution is the adaptation of this theorem to single-input \acp{DSA}, where wave processing is achieved through the dynamic configuration of passive scatterers. We prove that the governing equation of the \ac{DSA} satisfies the general theorem's assumptions. Consequently, we derive a closed-form sufficient condition for controllability as a function of the dimension of the observation space, the number of elements, and their geometry-dependent mutual coupling. For any specific device, whether characterized analytically, through full-wave simulations, or via measurements, this condition enables the instantaneous verification of controllability.

Finally, we provide a detailed discussion and numerical examples illustrating the interplay among the number of elements, physical size, \ac{DoF}, and directivity.

\subsection{Notation and Definitions} 
Boldface capital letters are matrices (e.g., $\boldA$), boldface lower case letters are vectors (e.g., $\boldy$), and $\boldI_N$ is the identity matrix of size $N$.
With reference to a generic matrix $\boldA$,  $a_{n,m}=[\boldA]_{n,m}$ represents its $(n,m)$th element, whereas $[\boldA]_{:,m}$ and $[\boldA]_{n,:}$ are, respectively, the $m$th column and the $n$th row of  $\boldA$. $\boldA\transpose$ and $\boldA\ctranspose$ indicate the transpose and the conjugate transpose of $\boldA$, respectively. We denote with $\left \|  \boldy  \right \|$ the  norm of vector $\boldy$. 
Define $\bolde_n = (0,0,1,0,\dots,0)^\top \in \Reals^N$ and $\boldE_n=\bolde_n \, \bolde_n\transpose=\diag{\bolde_n}$ being the $n$th standard basis diagonal matrix.
We indicate with $\Real{z}$, $\Imag{z}$, and $z^*$, respectively,  the real part,  imaginary part, and complex conjugate of the complex number $z$, and $\jmath$ the imaginary unit.
The statistical expectation of a random variable $x$ is indicated as $\EX{x}$.
Given the coordinate vector $\boldr$, $\hat{\boldr}=\boldr/r$ represents the unit vector in the radial direction and $r=\|\boldr\|$.
Denote with $\eta=377\,$Ohm the free-space impedance.

\begin{figure}[!t]
\centering\includegraphics[width=1\columnwidth]{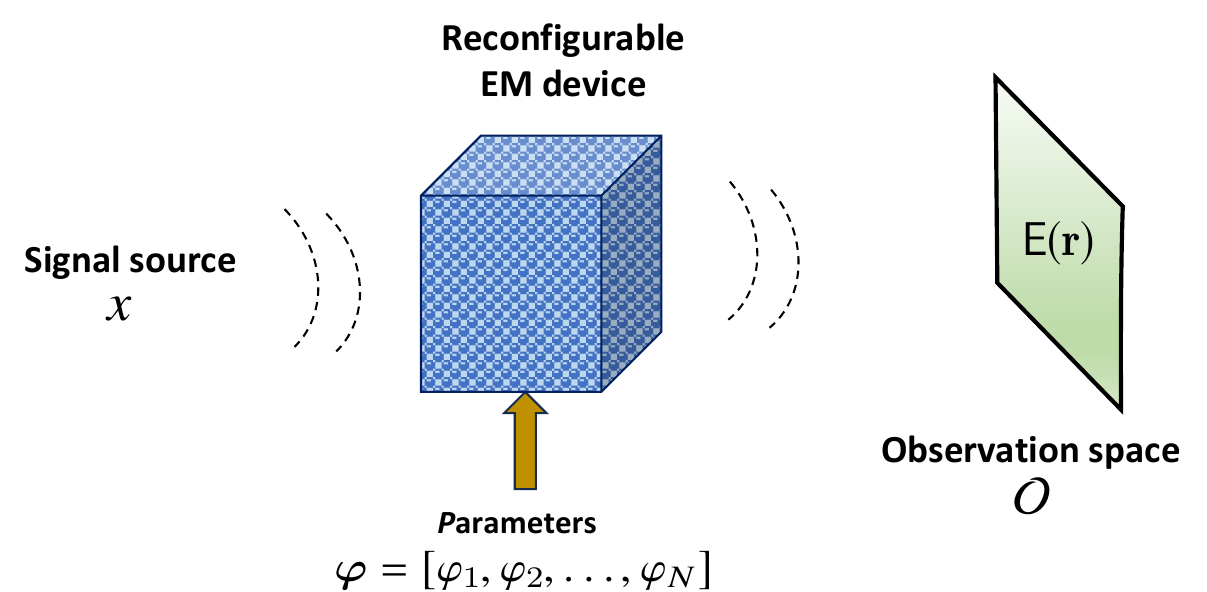}
\caption{Generic reconfigurable EM device.} 
\label{Fig:EMdevice}
\end{figure}

\subsection{Paper Organization}
The rest of the paper is organized as follows: 
Sec.~\ref{Sec:EMdevice} considers a generic reconfigurable \ac{EM} device and derives the necessary and sufficient conditions for its controllability. Starting from the $N$-port circuit model of a \ac{DSA}, the general result of Sec.~\ref{Sec:EMdevice} is particularized to obtain the controllability conditions for \acp{DSA} in Sec.~\ref{Sec:Model}. The detailed proofs of the theorems enounced in the previous sections are provided in Sec.~\ref{Sec:proof}. Starting from the characterization of radiating modes in unbounded 2D and 3D \ac{EM} spaces discussed in Sec.~\ref{Sec:Modes}, Sec.~\ref{Sec:NumericalResults} presents some numerical results to corroborate the theoretical findings. Finally, the conclusions are drawn in Sec.~\ref{Sec:Conclusion}.

\section{Controllability of Reconfigurable EM Devices}
\label{Sec:EMdevice}

In this section, we provide  sufficient conditions under which a generic linear reconfigurable \ac{EM} device is controllable. In Sec. \ref{Sec:Model}, such a result will be particularized to the controllability of \acp{DSA}.

\subsection{Response of a Generic Reconfigurable EM Device}

Consider a generic linear reconfigurable \ac{EM} device, as shown in Fig. \ref{Fig:EMdevice}, whose response to the generic source signal $x$ can be configured through the set of $N$ real parameters $\Parameters=[\Parameter_1, \Parameter_2, \ldots, \Parameters_N] \in \Reals^N$.  
In particular, given an observation space of interest $\mathcal{O}\subseteq \Reals^{3}$ in the radiating region of the \ac{EM} device, the electric field $\Em(\boldr)$ measured in $\mathcal{O}$ is given by $\Em(\boldr)=f(\boldr;\Parameters)\, x$, where $f(\boldr;\Parameters)$ denotes the  reconfigurable response of the device depending on $\Parameters$. We consider monochromatic or narrowband excitations centered at frequency $f_0$ corresponding to wavelength $\lambda$.\footnote{Through the text, a time dependence $e^{\jmath 2 \pi f_0 t}$ is implied.}  
In general, $x$ can represent an impinging \ac{EM} wave or the input port of the device.
We restrict our analysis to the case of a scalar (random) input $x$, i.e., a single source carrying the information. We leave the extension to multiple inputs for future work.

Consider now a vector basis set $\{\Psim_m(\boldr) \}_{m=1}^{\infty}$, with $\boldr \in \mathcal{O}$, for all feasible electric field distributions in the observation space $\mathcal{O}$. As it will be discussed later, the actual number of \ac{DoF} of the radiated \ac{EM} field is bounded depending on the size of the observation space and the \ac{EM} device; therefore, only a limited number $M$ of basis functions (\emph{modes}) is sufficient to represent the field with the required accuracy \cite{Dar:J24}. Accordingly, the field within the observation space  can be expressed with the series expansion
\begin{equation} \label{eq:TotE}
	\Em(\boldr) \simeq\sum_{m=1}^M  y_m \, \Psim_m(\boldr) \, .
\end{equation}
Therefore, the vector $\boldy=[y_1, y_2, \ldots, y_M]\transpose$ is representative of the field produced by the \ac{EM} device and observed in $\mathcal{O}$.

For convenience, we define the following input-output response (mapping function) 
\begin{equation} \label{eq:yT}
\boldy=T(\Parameters) \, x  
\end{equation}
which can be obtained by projecting $f(\boldr;\Parameters)$ into the $M$ basis functions $\{ \Psim_m(\boldr) \}$. 
Note that the \ac{EM} device response to the input $x$  in \eqref{eq:yT} is linear, whereas the dependence of $T(\Parameters)  \in \Complex^{M \times 1}$ on the parameters $\Parameters$ is highly nonlinear. In the following, without loss of generality, we will assume $x=1$ so that $\boldy=T(\Parameters)$.

The goal of this paper is to investigate under which conditions the reconfigurable \ac{EM} device is fully controllable, i.e., for a given observation space $\mathcal{O}$ with $M$ \ac{DoF} and for any observed field expressed by $\boldy$, there exists at least one device configuration $\Parameters$ such that $\boldy=\alpha \, T(\Parameters)$ for almost all possible fields, where $\alpha$ is a constant that can be filled through proper tuning of the transmitted power (link budget).  

To make the problem independent of $\alpha$, we define the function
\begin{equation}  \label{eq:s}
s(\boldy)=\frac{\boldy}{\| \boldy\|} : \Complex^M \backslash \{0\} \rightarrow \Sphere^{2M-1}
\end{equation} 
 where $\Sphere^{2M-1}=\left \{ \boldy \in \Complex^M : \| \boldy \|=1 \right \}$ denotes the unit sphere (direction space) in $\Complex^M \simeq \Reals^{2M}$.
Mathematically, the above-mentioned goal corresponds to proving that the normalized map 
\begin{equation}
\mathring{T}(\Parameters) = s(T(\Parameters))=\frac{T(\Parameters)}{\|T(\Parameters)\|} : \Reals^N \to \Sphere^{2M-1}
\end{equation}
is \emph{surjective almost everywhere} onto the full direction space. Specifically, surjectivity almost everywhere means that
\begin{equation} \label{eq:sure}
\forall \mathring{\boldy}=s(\boldy) \in \Sphere^{2M-1}, \forall \epsilon>0, \, \exists \Parameters : \left \| \mathring{T}(\Parameters) - \mathring{\boldy} \right \| < \epsilon
\end{equation}
that is, we obtain what we call \emph{dense controllability} of the \ac{EM} device's response; in other words, each direction is reachable arbitrarily closely.

\subsection{Dense Controllability Theorem} 

Here we enunciate the main Theorem of this paper, leaving the detailed proof in Sec.  \ref{Sec:proof}. 
To this purpose, define with $\JT(\Parameters)=\left [ \frac{\partial \Real{T(\Parameters)}}{\partial \Parameters} ;   \frac{\partial \Imag{T(\Parameters)}}{\partial \Parameters} \right ] \in \Reals^{2M\times N}$ the real Jacobian of $T(\Parameters)$.
%
%
The real Jacobian of the normalized map $\mathring{T}(\Parameters)$ is
\begin{equation} \label{eq:proj}
\JtT(\Parameters)=\frac{1}{\|T(\Parameters)\|} \left (\boldI_{2M} - \mathring{T}(\Parameters)\mathring{T}(\Parameters)\transpose \right ) \, \JT(\Parameters)\, 
\end{equation}
where  $\boldI_{2M} - \mathring{T}(\Parameters)\mathring{T}(\Parameters)\transpose$ projects onto the tangent space 
\begin{equation}
\Sphere_{\text{t}}= \left \{\boldv \in \Reals^{2M} : \boldv\transpose \mathring{T}(\Parameters) = 0 \right \}.
\end{equation}

\begin{theorem}[Dense controllability of $\mathring{T}$] 
\label{thm:main1}
Let $T:\Reals^N \to \Complex^M \simeq \Reals^{2M}$ be real analytic with $N \ge 2M$ and $T(\Parameters)\neq0$ everywhere.
If there exists $\Parameters_0$ such that
\begin{equation}
\rank{\boldJ_{T}(\Parameters_0)}=2M \, ,
\end{equation}
then $\mathring{T}(\Parameters)=s(T(\Parameters))$ is surjective almost everywhere on $\Sphere^{2M-1}$, i.e., 
for almost every $\mathring{\boldy} \in \Sphere^{2M-1}$ (i.e., up to a measure-zero set), there exists $\Parameters \in \Reals^N$ such that $\mathring{T}(\Parameters)=\mathring{\boldy}$. 
\end{theorem}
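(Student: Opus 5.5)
The plan is to pass from the pointwise rank condition at $\Parameters_0$ to a generic rank statement using real analyticity, then from full rank of $\JT$ to full rank of $\JtT$ on the sphere, then to openness of the image by the submersion theorem, and finally to a global covering statement.

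\textbf{Step 1 (generic full rank).} Since $T$ is real analytic, every $2M\times 2M$ minor of $\JT(\Parameters)$ is a real-analytic function of $\Parameters$. By hypothesis at least one such minor is nonzero at $\Parameters_0$, so it is not identically zero. The zero set of a nonzero real-analytic function on $\Reals^N$ is closed with empty interior and Lebesgue measure zero. Hence the set $\mathcal{G}=\{\Parameters:\rank{\JT(\Parameters)}=2M\}$ is open, dense, and of full measure in $\Reals^N$.

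\textbf{Step 2 (rank of the normalized map).} By \eqref{eq:proj}, $\JtT=\|T\|^{-1}P\,\JT$, where $P=\boldI_{2M}-\mathring{T}\mathring{T}\transpose$ is the orthogonal projector onto the $(2M-1)$-dimensional tangent space $\Sphere_{\text{t}}$. On $\mathcal{G}$ the range of $\JT$ is all of $\Reals^{2M}$. Therefore the range of $\JtT$ is $P\Reals^{2M}=\Sphere_{\text{t}}$, i.e., $\mathring{T}$ is a submersion onto $\Sphere^{2M-1}$ at every point of $\mathcal{G}$. This step uses $T\neq 0$, so that $\mathring{T}$ is analytic, and $N\ge 2M>2M-1$.

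\textbf{Step 3 (open image and small critical values).} By the local submersion theorem, $\mathring{T}(\mathcal{G})$ is a nonempty open subset $U\subseteq\Sphere^{2M-1}$. By Sard's theorem, the set $C=\mathring{T}(\Reals^N\setminus\mathcal{G}')$ has measure zero, where $\mathcal{G}'\supseteq\mathcal{G}$ is the set of regular points of $\mathring{T}$.

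\textbf{Step 4 (covering the sphere — main obstacle).} It remains to show that $\Sphere^{2M-1}\setminus \mathring{T}(\Reals^N)$ has measure zero. Consider a boundary point $\boldsymbol{z}\in\partial U$.
\begin{itemize}
\item If $\boldsymbol{z}$ is attained at a regular point, it lies in the interior of the image. So a boundary point of the whole image $\mathring{T}(\Reals^N)$ is either a critical value, which lies in a null set, or a limit of $\mathring{T}(\Parameters_k)$ with $\|\Parameters_k\|\to\infty$, i.e., an asymptotic value.
\item Since $\Sphere^{2M-1}$ is connected for $M\ge1$, if the image is open modulo a null set and relatively closed, it is everything up to measure zero.
\end{itemize}
The difficulty is therefore to exclude a positive-measure set of asymptotic boundary values. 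For instance, a real-analytic $T$ whose image lies in an open half-space cone satisfies the rank hypothesis but is not surjective. So I expect this step to need an additional structural input that the statement leaves implicit.

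I would try first to exploit periodicity or compactness of the parameter domain. In the \ac{DSA} setting the configurations are reactive loads that can be parametrized over a compact torus (e.g., phases). The map is then proper, so the image of the compact domain is closed, and the boundary argument above closes the proof. Failing that, I would argue via the phase symmetry $T\mapsto e^{\jmath\theta}T$ together with connectedness to enlarge $U$. If neither works, I would state the result as ``the reachable set contains an open set and is of full measure whenever $\mathring{T}$ is proper'' and verify properness separately for the \ac{DSA} model of Sec.~\ref{Sec:Model}.
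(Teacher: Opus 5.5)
\textbf{Steps 1--3 and the diagnosis in Step 4.} Your Steps 1--3 are correct and match the paper's argument: its rank-preservation lemma, its lemma that $\mathring{T}$ is a submersion at almost every $\Parameters$, and its appeal to the submersion theorem and Sard's theorem. Your suspicion about Step 4 is also correct. The statement is false as posed, and the paper's proof breaks exactly there.

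The paper deduces from Sard's theorem that almost every $\mathring{\boldy}\in\Sphere^{2M-1}$ is a regular value. It then treats the union $\mathcal{U}\subset\image{\mathring{T}}$ of open neighborhoods of regular values as having null complement. But Sard's theorem only says that the critical values form a null set. The complement of that null set contains attained regular values and also every point that is not attained at all (vacuously regular). So Sard says nothing about which directions are reached.

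An explicit counterexample: take $M=1$, $N=2$, $T(\Parameters)=e^{\Parameter_1}(1+\jmath\Parameter_2)$. It is real analytic and never zero, and its real Jacobian has determinant $e^{2\Parameter_1}\neq0$ at every $\Parameters$. Yet $\Real{\mathring{T}(\Parameters)}=(1+\Parameter_2^2)^{-1/2}>0$, so the closed left half of $\Sphere^1$ is never reached. This refutes both the almost-everywhere surjectivity in the statement and the weaker density \eqref{eq:sure} that the paper's proof arrives at. What the hypotheses actually give is your Step 3: the reachable directions contain a nonempty open set.

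\textbf{Your proposed repairs.} These rest on a false principle. In the example above the boundary of the image is $\{\pm\jmath\}$, a null set, and still half the circle is missed. A closed null set can separate the sphere. So an image that is ``open modulo a null set and relatively closed'' need not have full measure, and bounding the measure of asymptotic or critical boundary values can never suffice.

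In particular, properness is not the missing ingredient. Consider $T(\Parameters)=(2+\cos\Parameter_1)\,e^{\jmath\sin\Parameter_2}$. It is $2\pi$-periodic, hence effectively defined on a compact torus, and never zero. Its real Jacobian has rank $2$ whenever $\sin\Parameter_1\cos\Parameter_2\neq0$. But $\image{\mathring{T}}=\{e^{\jmath t}:|t|\le1\}$ is a closed arc whose endpoints are critical values (a fold along $\cos\Parameter_2=0$).

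Phase symmetry does not help for $M\ge2$ either. Take $T(\Parameters)=e^{\Parameter_3+\jmath\Parameter_4}(1,\,\Parameter_1+\jmath\tanh\Parameter_2)\transpose$. Its real Jacobian has rank $4$ everywhere and its image is invariant under global phase. Yet every direction with $|\Imag{y_2/y_1}|\ge1$ is missed.

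For the DSA model the compactification is itself delicate. As the active element's reactance $\Parameter_1\to\pm\infty$, $\mathring{T}$ tends to two limits of opposite sign, so it does not extend continuously to a torus.

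Connectedness of $\Sphere^{2M-1}$ does close the argument when the image is genuinely open and closed. One example is when $\mathring{T}$ is a submersion at every point of a compact boundaryless parameter manifold. Without some global hypothesis of this kind, only local controllability around regular configurations follows.
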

\begin{proof}
See Sec. \ref{Sec:proof1}.
\end{proof}

The result in Theorem \ref{thm:main1} is quite general and it can be applied to several problems in wave-domain processing,  such as \ac{DSA}, \acp{SIM}, \acp{RIS},  \acp{ESPAR}, as well as analog computing using microwave networks \cite{NerCle:25a}.

\begin{figure}[!t]
\centering\includegraphics[width=1\columnwidth]{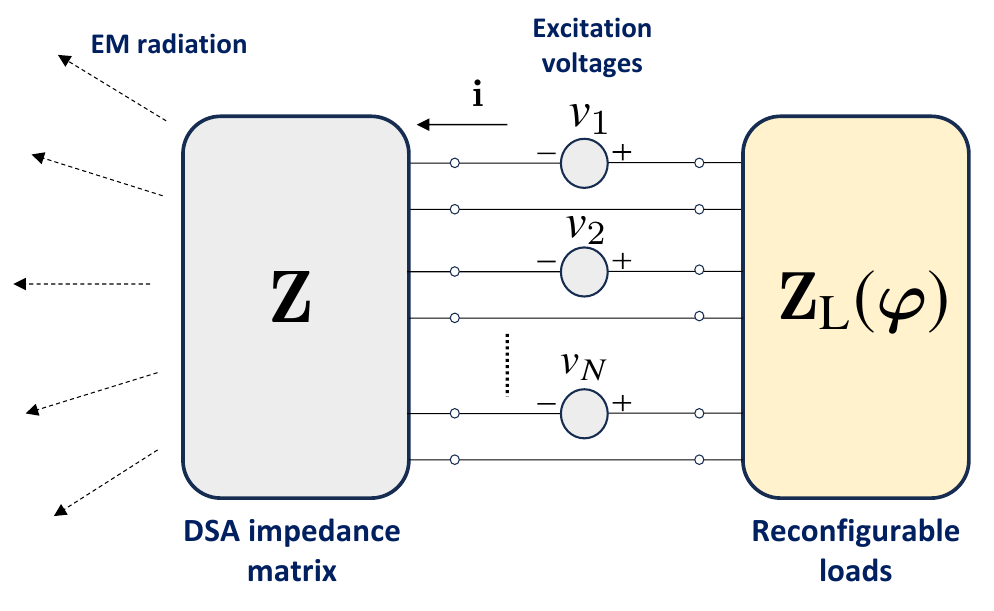}
\caption{Equivalent $N$-port circuit of the \ac{DSA}.} 
\label{Fig:NPort}
\end{figure}

\section{Controllability of DSA}
\label{Sec:Model}

In this section, we derive a specialized follow-up theorem for single-input transmitting \acp{DSA} to investigate their controllability.

\subsection{$N$-port Circuit Model for the DSA}

A transmitting \ac{DSA} is composed of $N$ antenna elements located at positions $\boldr_n \in \Reals^3$ that interact within their reactive near field region, each loaded with tunable reactances $\jmath \Parameter_n$, $n=1, 2, \ldots, N$. Denote with $a$ the radius of the sphere enclosing the \ac{DSA}. 
The first $S$ elements are active (sources), i.e., they are connected to  \ac{RF} chains and carry the information, the remaining $\Ns=N-S$ elements act as reconfigurable passive scatterers.

The \ac{DSA} can be modeled as a linear  $N$-port network as shown in Fig. \ref{Fig:NPort} \cite{Dar:J26}. 
In a compact description, we define  $\boldi=[i_1, i_2, \ldots , i_N]\transpose \in \Complex^{N \times 1}$  and $\boldv=[v_1, v_2, \ldots , v_N]\transpose \in \Complex^{N \times 1}$ the complex current and excitation voltage envelopes  (in the following denoted simply as currents and voltages) at the ports of the $N$ antennas. 
Since the $\Ns$ scatterers are passive, then the last $\Ns$ elements of $\boldv$ are zero. 
Notably, the presented framework and subsequent results are more general than \acp{DSA} where the excitation is limited to the first antenna. In this broader context, $\boldv$ can account for external excitations, such as a wave from a remote source impinging a \ac{RIS}.

All the interactions between the elements of the \ac{DSA} are captured by the impedance matrix $\boldZ  \in \Complex^{N \times N}$, which does not depend on the reconfigurable loads, and relates the voltages and currents of the $N$ ports \cite{BalB:16}. It can be computed or obtained through full-wave \ac{EM} simulations once during the design of the \ac{DSA} according to the technology adopted. 
The impedance matrix can be decomposed into the real and imaginary parts as $\boldZ = \boldR + \jmath \boldX$, and it is complex symmetric. The real matrix $\boldR \succeq 0$ is symmetric semidefinite positive, and is responsible for the radiation field, whereas $\boldX$ accounts for the reactive field. 

From basic circuit theory considerations (see Fig. \ref{Fig:NPort}), the following relationship governs the network
\begin{equation} 
\label{eq:iv}
\left ( \Zl(\Parameters) + \boldZ \right )\, \boldi=\boldv
\end{equation}
where $\Zl(\Parameters)=\jmath\,  \diag{ \Parameters} + \bRd$, being $\Parameters=[\Parameter_1, \Parameter_2, \ldots, \Parameter_N]$ collecting the configuration parameters, and $\bRd=\Rd \boldI_N$  accounts for any loss in the loads and/or the antenna with $\Rd$ denoting the loss resistance. 

For a given parameter configuration $\Parameters$ and from \eqref{eq:iv}, the currents in the system are $\boldi=\left ( \Zl(\Parameters) + \boldZ \right )^{-1} \, \boldv \, x$. The input $x$ can affect one or more active antennas depending on $\boldv$. In case of single-input \ac{DSA}, i.e.,  $S=1$, then $\boldv=\bolde_1$.  
The radiated and dissipated powers are given respectively by
\begin{equation}
\Prad=\EX{\boldi\ctranspose \boldR \boldi} \quad \quad \quad \Pd=\EX{\boldi\ctranspose \bRd \boldi} 
\end{equation}
corresponding to power efficiency $\etad=\Prad/(\Prad+\Pd)$.

Within the observation space, the electrical field component $\Em_n(\boldr)$ observed at location $\boldr$ generated by element $n$, excited by the current $i_n$, can be expressed with the series expansion 
\begin{equation} \label{eq:expansion}
	\Em_n(\boldr)=\sum_{m=1}^M  f_{m,n}  \, \Psim_m(\boldr) \, i_n \, .
\end{equation}
The total field in the observation space is given by \eqref{eq:TotE}, where $y_m=\sum_{n=1}^N f_{m,n} \, i_n$. In compact form, $\boldy=\boldF \, \boldi$, having defined $\boldF=\{ f_{m,n} \} \in \Complex^{M\times N}$. 
The matrix $\boldF$ defines the forward linear operator from dipole currents to mode's coefficients, as illustrated in Fig. \ref{fig:Spaces}.

From \eqref{eq:iv}, the input-output relationship of the \ac{DSA} is $\boldy=T(\Parameters)\, x$, where the \ac{DSA} response is 
\begin{equation}
\label{eq:Tx}
T(\Parameters)= \boldF \left (\jmath \diag{\Parameters} + \bRd+ \boldZ \right)^{-1} \boldv \in \Complex^{M \times 1} \, .
\end{equation}
As in the previous section, without loss of generality, we will assume $x=1$ so that $\boldy=T(\Parameters)$.

\begin{figure}[!t]
\centering\includegraphics[width=1\columnwidth]{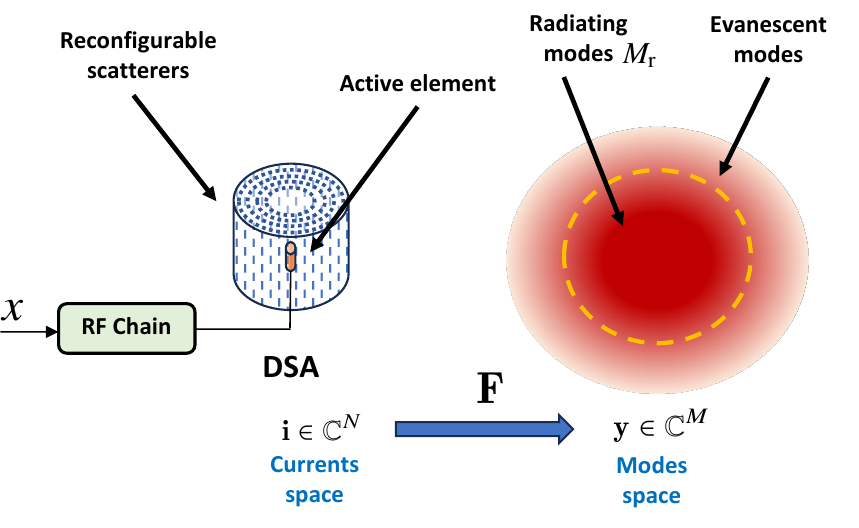}
\caption{The mapping between currents and modes in a single-input \ac{DSA}.} 
\label{fig:Spaces}
\end{figure}

\subsection{Dense Controllability Theorem for a DSA}

The following theorem specializes Theorem \ref{thm:main1} to the mapping in \eqref{eq:Tx} valid for \acp{DSA} to find sufficient conditions for dense controllability as a function of the impedance matrix $\boldZ$, the current-modes mapping function $\boldF$, i.e., on the structure and size of the \ac{DSA}. 

\begin{theorem}[Sufficient conditions for dense controllability of DSAs] 
\label{thm:main2}
Given the specific mapping $T(\Parameters)$ in \eqref{eq:Tx} and defining $\mathring{T}(\Parameters)=s(T(\Parameters))$, a sufficient condition for $\mathring{T}(\Parameters)$ to be surjective almost everywhere on $\Sphere^{2M-1}$ is given by
\begin{equation} \label{eq:suff}
    \rank{\boldF \, (\bRd+\boldZ)^{-1} \,\diag{(\bRd+\boldZ)^{-1} \, \boldv}} = M \, .
\end{equation}
\end{theorem}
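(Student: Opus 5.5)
The plan is to verify the hypotheses of Theorem~\ref{thm:main1} for the map \eqref{eq:Tx}, and to show that \eqref{eq:suff} forces $\rank{\JT(\Parameters_0)}=2M$ at the particular configuration $\Parameters_0=\boldzero$. This choice of point is natural because the matrix $(\bRd+\boldZ)^{-1}$ appearing in \eqref{eq:suff} is exactly $\boldA(\boldzero)^{-1}$, where $\boldA(\Parameters)=\jmath\diag{\Parameters}+\bRd+\boldZ$. Throughout we take $N\ge 2M$, which Theorem~\ref{thm:main1} requires and which is implicitly assumed here.

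\emph{Step 1: real analyticity and nonvanishing of $T$.} We first show that $\boldA(\Parameters)$ is invertible for every real $\Parameters$, assuming $\Rd>0$. If $\boldA\boldi=\boldzero$, then
\[
0=\Real{\boldi\ctranspose\boldA\boldi}=\boldi\ctranspose(\bRd+\boldR)\boldi\ge \Rd\|\boldi\|^2 .
\]
This uses $\boldR\succeq0$, symmetry of $\boldR$ and $\boldX$, and the fact that $\boldi\ctranspose(\jmath\diag{\Parameters}+\jmath\boldX)\boldi$ is purely imaginary. Hence $\boldi=\boldzero$. Since $\boldA^{-1}$ is a rational function of $\Parameters$ with nowhere-vanishing denominator $\det\boldA$, the map $T$ is real analytic on $\Reals^N$.

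Nonvanishing of $T$ needs more care, since $\boldF\boldi=\boldzero$ is not excluded a priori. I would argue as follows: the zero set of an analytic $T$ that is not identically zero is a proper analytic subvariety. Moreover, the rank condition at $\Parameters_0$ already implies that $T$ is a submersion there. Either the proof of Theorem~\ref{thm:main1} tolerates a measure-zero zero set, which I expect, or one uses physical arguments: $\boldi\neq\boldzero$ because $\boldv\neq\boldzero$, and $\boldF$ restricted to radiating currents is injective. I expect this point to be a technical caveat rather than the core of the proof.

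\emph{Step 2: the Jacobian.} Write $\boldi(\Parameters)=\boldA^{-1}\boldv$. Using $\partial\boldA/\partial\Parameter_n=\jmath\boldE_n$, we get
\[
\frac{\partial T}{\partial \Parameter_n}=-\jmath\,\boldF\boldA^{-1}\boldE_n\boldA^{-1}\boldv=-\jmath\, i_n\,\boldF\boldA^{-1}\bolde_n .
\]
Stacking these columns gives the complex Jacobian
\[
\boldJ_{\Complex}(\Parameters)=-\jmath\,\boldF\boldA^{-1}\diag{\boldA^{-1}\boldv}\in\Complex^{M\times N}.
\]
At $\Parameters_0=\boldzero$ this is exactly $-\jmath$ times the matrix in \eqref{eq:suff}. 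Since the $\Parameter_n$ are real, $\JT=[\Real{\boldJ_{\Complex}};\Imag{\boldJ_{\Complex}}]$.

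\emph{Step 3: from complex rank $M$ to real rank $2M$.} This is the main obstacle. The real span of the columns of a complex matrix of rank $M$ need not be all of $\Complex^M\simeq\Reals^{2M}$: a matrix with real entries, for example, has real column span equal to $\Reals^M$ only. So \eqref{eq:suff} by itself does not give $\rank{\JT(\boldzero)}=2M$. I would try two routes.

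The first route is to exploit the freedom in $\Parameters_0$. Since $\rank{\JT(\Parameters)}$ is generically constant by analyticity, it suffices to exhibit a single point with real rank $2M$. At a generic $\Parameters$, the phases of the $i_n$ rotate the columns $\boldF\boldA^{-1}\bolde_n$, and the columns of $\boldA(\Parameters)^{-1}$ themselves vary with $\Parameters$. I would then show that the real span can always be completed. For instance, if $\Real{\boldu\ctranspose\boldJ_{\Complex}(\Parameters)}=\boldzero$ held for all $\Parameters$ and some fixed $\boldu\neq\boldzero$, differentiating this identity in $\Parameters$ should contradict complex rank $M$.

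The second, more pragmatic route is to interpret \eqref{eq:suff} as a generic real-rank condition, since complex rank $M$ and $N\ge2M$ columns with generic phases yield real rank $2M$. The argument would then be finished by applying Theorem~\ref{thm:main1}.

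I would attempt the first route. The key step would be to show that the set of $\boldu\neq\boldzero$ annihilating the real span is not invariant under perturbations of $\Parameters$ that are consistent with \eqref{eq:suff}.
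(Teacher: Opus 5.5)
Your Steps 1 and 2 reproduce the paper's argument: the same passivity proof that $\boldA(\Parameters)=\jmath\,\diag{\Parameters}+\bRd+\boldZ$ is invertible, the same Jacobian $\JTc(\Parameters)=-\jmath\boldF\boldA(\Parameters)^{-1}\diag{\boldA(\Parameters)^{-1}\boldv}$, and the same choice $\Parameters_0=\boldzero$. The paper then concludes in one line by asserting $\rank{\JT(\Parameters)}=2\,\rank{\JTc(\Parameters)}$. Your objection to that step is correct. The identity holds for holomorphic maps of complex variables, whose real Jacobian is $[\Real{\JTc},-\Imag{\JTc};\Imag{\JTc},\Real{\JTc}]$. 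With real parameters, however, $\JT=[\Real{\JTc};\Imag{\JTc}]$ can have rank as low as $\rank{\JTc}$.

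Your Step 3 remains only a plan, and it cannot be completed, because the statement itself is false. Take $M=1$, $N\ge 2$, $\boldv=\bolde_1$, $\boldF=\bolde_1\transpose$, and any admissible $\boldZ$ (one may even take $\boldR\propto\boldF\ctranspose\boldF$). Then $T(\Parameters)=i_1$ with $\boldi=\boldA(\Parameters)^{-1}\bolde_1$. From $\boldi\ctranspose\boldA\boldi=i_1^*$ we get $\Real{i_1}=\boldi\ctranspose(\boldR+\bRd)\boldi>0$ for every $\Parameters$, since the loaded input admittance is passive. Hence $\mathring{T}$ never leaves the open right half of $\Sphere^1$, and \eqref{eq:sure} fails at $\mathring{\boldy}=-1$. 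Yet, because $(\bRd+\boldZ)^{-1}$ is complex symmetric, the matrix in \eqref{eq:suff} is the row $[\tilde v_1^2,\dots,\tilde v_N^2]$ with $\tilde{\boldv}=(\bRd+\boldZ)^{-1}\bolde_1$. Since $\Real{\tilde v_1}>0$, this row has rank $1=M$. (With coupling and generic extra rows in $\boldF$, the same construction works for any $M$, since $\Real{y_1}=\Real{i_1}>0$ persists.)

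This one family defeats both of your routes. By the same symmetry, $\partial T/\partial\Parameter_n=-\jmath\,i_n^2$.

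\emph{First route.} If $\boldZ$ is diagonal, these derivatives vanish for $n\ge 2$, so $\rank{\JT(\Parameters)}=1$ for every $\Parameters$. This holds even though the annihilated real direction rotates with $\Parameter_1$. So complex rank $M$ at $\boldzero$ does not force real rank $2M$ anywhere, and ruling out a fixed annihilator $\boldu$, as your first route proposes, would not have sufficed anyway.

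\emph{Second route.} If instead $N=2$ and $[\boldZ]_{1,2}\neq 0$, then $i_1^2/i_2^2=(\Rd+[\boldZ]_{2,2}+\jmath\Parameter_2)^2/[\boldZ]_{1,2}^2$ is real for at most two values of $\Parameter_2$. Hence $\rank{\JT(\Parameters)}=2=2M$ almost everywhere. All hypotheses of Theorem~\ref{thm:main1} then hold, yet its conclusion still fails. So your second route, upgrading \eqref{eq:suff} to a real-rank condition and invoking Theorem~\ref{thm:main1}, fails too.

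The root defect lies in the proof of Theorem~\ref{thm:main1}. Local openness plus Sard only show that the image of $\mathring{T}$ contains some nonempty open set. Points outside the image are not critical values, and Sard says nothing about them. In this example the obstruction is global: passivity confines $T$ to a half-space. Therefore no condition on the Jacobian rank, at a single point or almost everywhere, can by itself guarantee dense controllability.
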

\begin{proof} See Sec. \ref{Sec:proof2}.
\end{proof}

\emph{Remark}: It is interesting to note that the condition in \eqref{eq:suff} is independent of the specific configuration parameters of the \ac{DSA} $\Parameters$. 

Define the vector $\tilde{\boldv}=(\bRd+\boldZ)^{-1} \, \boldv$, the matrix $\tilde{\boldF}=\boldF \, (\bRd+\boldZ)^{-1}$, and let $\mathcal{I}=\left \{n : |\tilde{v}_n|>\gamma>0 \right \}$ be the indices of not negligible components of $\tilde{\boldv}$ for some small $\gamma$. Condition \eqref{eq:suff} requires that: 
\begin{enumerate} 

\item  The cardinality of $\mathcal{I}$ must be at least $M$: This means that the excitation vector $\boldv$ combined with the coupling effect determined by the impedance matrix $\boldZ$ must affect at least $M$ elements of the \ac{DSA}; 

\item The submatrix $\left [\tilde{\boldF} \right ]_{:, \mathcal{I}}$ must have rank $M$: There should be a coupling between the excited \ac{DSA}'s elements and the radiating modes so that all radiating modes are potentially reachable. This implies that a necessary condition for surjectivity is $\rank{\tilde{\boldF}}=M$.

\end{enumerate}

The result of Theorem \ref{thm:main2} provides a powerful and general tool to assess whether a specific structure is controllable within the observation space $\mathcal{O}$.
On the other hand, the result is useful in determining the design guidelines for a \ac{DSA} in terms of geometry, element density/spacing, excitation, etc.

For instance, consider a \ac{DSA} of $N$ elements where only the first element is actively driven, i.e., $\boldv=\bolde_1$, while the remaining $N-1$ elements are passively loaded with tunable reactances. Suppose that the elements are uncoupled so that $\boldZ=\Rr \boldI$, where $\Rr$ denotes the radiation resistance of the single antenna element. It follows that only one element of $\tilde{\boldv}=(\Rr+\Rd)^{-1} \boldv$ is different from zero, and condition 1) is not satisfied. Therefore, such a \ac{DSA} is not fully controllable.\footnote{It can be easily shown that the Jacobian has rank 1 for all $\Parameters$ in this case.}  This highlights the fundamental role played by coupling in \acp{DSA} for full controllability.
On the contrary, if all elements of the antenna structure are excited by an external impinging wave (all elements of $\boldv$ are different from zero), full controllability might potentially be achieved even in the case of uncoupled elements. This is a typical situation where the structure is a diagonal \ac{RIS} with uncoupled elements (e.g., element spacing equal to $\lambda/2$).  
As a final remark, $N\ge 2M$ is a necessary condition for controllability, i.e., the number of elements should be at least twice the number of modes considered to represent the field in the observation space.

\section{Proofs of Theorems \ref{thm:main1} and \ref{thm:main2}}
\label{Sec:proof}

\subsection{Proof of Theorem \ref{thm:main1}}
\label{Sec:proof1}

Here we recall some definitions, theorems, and lemmas of differential geometry \cite{Lee:B13} that will be used to prove Theorem \ref{thm:main1}.

\begin{definition}[Regular and critical points]
Let $\Ncal$ and $\Mcal$ be differentiable manifolds and let $f: \Ncal \to \Mcal$ be a differentiable map between them. The map $f$ is a submersion at a point $\boldp \in \Ncal$ if its differential is a surjective linear map, i.e., its Jacobian has full rank in $\boldp$. 
In this case, $\boldp$ is called a \emph{regular point} of the map $f$ and $f(\boldp)$ \emph{regular value},  otherwise, $\boldp$ is a \emph{critical point} and $f(\boldp)$ is a \emph{critical value}.
\end{definition}


\begin{theorem}[Submersion theorem]
\label{thm:sub}
A standard result in differential geometry states that submersions are locally open maps. In particular, if $f$ is a submersion at $\boldp$, then $f$ maps a neighborhood of $\boldp$ in $\Ncal$ to an open set in $\Mcal$ \cite{Lan:B99}.
\end{theorem}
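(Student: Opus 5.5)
The plan is to reduce the statement to the inverse function theorem by working in local coordinates. Since being a submersion and being an open map are both local properties invariant under diffeomorphisms, I first choose charts around $\boldp \in \Ncal$ and around $f(\boldp) \in \Mcal$. This reduces the problem to a differentiable map $f:U \subseteq \Reals^n \to \Reals^m$ with $U$ open, $n=\dim \Ncal$, $m=\dim\Mcal$. Its Jacobian $\boldJ_f(\boldp) \in \Reals^{m\times n}$ has rank $m$, which forces $n \ge m$.

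The key step is to complete $f$ to a square map with invertible Jacobian. Because $\boldJ_f(\boldp)$ has rank $m$, its kernel has dimension $n-m$. I choose a linear map $L:\Reals^n\to\Reals^{n-m}$ whose restriction to $\ker \boldJ_f(\boldp)$ is an isomorphism; for instance, a matrix whose rows form a basis of that kernel works. I then define $G(\boldu)=(f(\boldu), L\boldu):U\to \Reals^m\times\Reals^{n-m}$. The Jacobian of $G$ at $\boldp$ is the block matrix $[\boldJ_f(\boldp); L]$. It is injective, because any vector killed by it lies in $\ker\boldJ_f(\boldp)$ and is also annihilated by $L$, hence is zero; being square, it is invertible. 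The inverse function theorem then gives open sets $V\ni\boldp$ and $W\ni G(\boldp)$ such that $G:V\to W$ is a diffeomorphism. By continuity of the determinant of the Jacobian, I can shrink $V$ so that $f$ remains a submersion on all of $V$.

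Next I write $f=\pi\circ G$ on $V$, where $\pi:\Reals^m\times\Reals^{n-m}\to\Reals^m$ is the projection onto the first factor. Let $V'\subseteq V$ be any open set. Then $G(V')$ is open because $G$ is a homeomorphism onto $W$. Moreover $\pi$ is an open map, since it sends open product boxes to open boxes and every open set is a union of such boxes. Hence $f(V')=\pi(G(V'))$ is open in $\Reals^m$, and pulling back through the chart of $\Mcal$ gives an open subset of $\Mcal$. In particular $f(V)$ is an open neighborhood of $f(\boldp)$, which is the claim.

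The only delicate step is constructing the completion $L$ and checking that $\boldJ_G(\boldp)$ is invertible. This is where surjectivity of the differential is genuinely used: it fixes the kernel dimension at exactly $n-m$, so the augmented Jacobian is square. Everything else is bookkeeping with charts and the standard inverse function theorem, which I would take as given, consistent with the paper's citation of \cite{Lan:B99}.
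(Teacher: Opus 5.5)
The paper does not prove this statement; it cites it as a standard fact from \cite{Lan:B99}. Your argument is exactly the standard proof behind that citation, and it is correct: you complete $f$ with a linear map $L$ that is injective on $\ker \boldJ_f(\boldp)$, obtain a local diffeomorphism $G$ from the inverse function theorem, and write $f=\pi\circ G$ locally.

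The one hypothesis you use tacitly is that $f$ is $C^1$. The inverse function theorem needs it, and mere differentiability with a surjective differential at $\boldp$ does not give local openness. This hypothesis holds in the paper's application, since $\mathring{T}$ is real analytic.
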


\begin{theorem}[Sard's theorem]
\label{thm:Sard}
Given the smooth map $f: \Ncal \to \Mcal$ and the set $\mathcal{C}$ of its critical points,\footnote{A smooth map has continuous derivatives of all orders. A real analytical function is a smooth map.} the image $f(\mathcal{C})$ of $\mathcal{C}$ (set of critical values) has Lebesgue measure zero \cite{Sar:48}.
\end{theorem}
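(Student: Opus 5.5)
The plan is to follow the classical Morse--Sard--Milnor argument, since the statement is purely local and the paper only needs the smooth (in fact real-analytic) case. Manifolds are second countable and a countable union of measure-zero sets has measure zero. So I would first cover $\Ncal$ and $\Mcal$ by countably many coordinate charts. This reduces the claim to a smooth map $f:U\subseteq\Reals^n\to\Reals^p$ with $U$ open, where $\mathcal{C}$ is the set of points at which the Jacobian has rank $<p$. If $n<p$ every point is critical. In that case $f(U)$ is the image of $U\times\{0\}$ under the smooth map $(\boldu,\boldp')\mapsto f(\boldu)$ from $\Reals^{p}$, and a locally Lipschitz map sends the null set $U\times\{0\}$ to a null set. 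So the substantive case is $n\ge p$, which is exactly the regime $N\ge 2M$ of Theorem~\ref{thm:main1}.

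For $n\ge p$ I would argue by induction on $n$ and filter the critical set by order of vanishing. Let $\mathcal{C}_k$ be the set of points where all partial derivatives of $f$ of order $1,\dots,k$ vanish, giving $\mathcal{C}\supseteq\mathcal{C}_1\supseteq\mathcal{C}_2\supseteq\cdots$. The proof then splits into three pieces:
\begin{enumerate}
\item \emph{$f(\mathcal{C}\setminus\mathcal{C}_1)$ is null.} Near such a point some first derivative, say $\partial f_1/\partial u_1$, is nonzero. The change of coordinates $h(\boldu)=(f_1(\boldu),u_2,\dots,u_n)$ is then a local diffeomorphism. In the new coordinates $f$ preserves the first coordinate, so it restricts on each hyperplane $\{t\}\times\Reals^{n-1}$ to a map $g_t$. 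The critical points of $f$ on that hyperplane are exactly the critical points of $g_t$. By the induction hypothesis $g_t(\text{crit})$ is null in $\{t\}\times\Reals^{p-1}$, and Fubini's theorem then gives a null set in $\Reals^p$.
\item \emph{$f(\mathcal{C}_k\setminus\mathcal{C}_{k+1})$ is null.} Pick a $k$th-order derivative $w$ of $f$ that vanishes on $\mathcal{C}_k$ but has a nonzero first derivative at the point. Then $\{w=0\}$ is locally a smooth hypersurface $V$ containing $\mathcal{C}_k$, and every point of $\mathcal{C}_k$ is critical for $f|_V$. The induction hypothesis in dimension $n-1$ shows that $f(\mathcal{C}_k\cap V)$ is null.
\item \emph{$f(\mathcal{C}_k)$ is null once $k>n/p-1$.} Cover a compact cube by $r^n$ subcubes of side $\delta/r$. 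Taylor's theorem gives $\|f(\boldu+\boldsymbol{h})-f(\boldu)\|\le c\|\boldsymbol{h}\|^{k+1}$ for $\boldu\in\mathcal{C}_k$, so each subcube meeting $\mathcal{C}_k$ is mapped into a set of volume $O(r^{-(k+1)p})$. The total volume is therefore $O(r^{n-(k+1)p})$, which tends to $0$ as $r\to\infty$.
\end{enumerate}
Taking the union of these finitely many pieces over a countable cover completes the argument.

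I expect step 1, the Fubini step, to be the main obstacle. There one must check that $f(\mathcal{C}\setminus\mathcal{C}_1)$ is measurable, since it is $\sigma$-compact as a countable union of images of compact sets. One must also check that the slice-wise critical sets coincide with the global ones after the coordinate change, which comes from the block-triangular form of the Jacobian. A shortcut worth noting is that the paper only uses real-analytic maps. One could instead stratify the analytic critical set $\mathcal{C}$ into finitely many analytic submanifolds locally, via Łojasiewicz's stratification, and observe that on each stratum $S$ the restriction $f|_S$ has rank $<p$ everywhere. The constant-rank theorem then shows that its image is locally a countable union of submanifolds of dimension $<p$, hence null. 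I would keep the classical argument as the primary route, since it needs no analytic-geometry machinery.
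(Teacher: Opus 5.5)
The paper gives no proof of this statement. Sard's theorem is invoked as a black box via the citation, and the footnote only records that real-analytic maps are smooth, so that the cited result applies to $\mathring{T}$.

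Your outline is the standard Pontryagin--Milnor proof for $C^\infty$ maps, and it is correct as sketched:
\begin{itemize}
\item the chart reduction, with the Lipschitz argument handling $n<p$;
\item for $\mathcal{C}\setminus\mathcal{C}_1$, the block-triangular Jacobian after the change of variables $h$, together with Fubini and $\sigma$-compactness for measurability;
\item for $\mathcal{C}_k\setminus\mathcal{C}_{k+1}$, the hypersurface $\{w=0\}$, where $df=0$ makes every such point critical for $f|_V$; the inductive hypothesis in dimension $n-1$ is available for every target dimension because you treat $n-1<p$ separately;
\item the Taylor/volume count $O(r^{n-(k+1)p})\to 0$ once $(k+1)p>n$.
\end{itemize}
Compared with the paper, your route buys self-containedness. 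The citation buys brevity and even more generality, since Sard's original theorem needs only $C^{k}$ regularity with $k\ge\max(n-p+1,1)$. Neither difference matters here, because $T$ is real analytic.

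Two minor remarks. First, Sard's theorem requires no dimension hypothesis. In the paper it is applied to $\mathring{T}:\Reals^N\to\Sphere^{2M-1}$, whose target has dimension $2M-1$, so $N\ge 2M$ is not ``exactly'' the regime $n\ge p$; this is harmless.

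Second, and more substantively, your analytic shortcut is incomplete as written. On a stratum $S$ of $\mathcal{C}$, the restriction $f|_S$ has rank $<p$, but generally not \emph{constant} rank, so the constant-rank theorem does not apply directly. You would need to further stratify each connected $S$ by the rank of $d(f|_S)$: the maximal-rank locus is open in $S$, its complement is a lower-dimensional analytic subset, and you recurse on dimension. Since you keep the classical argument as your primary route, this does not affect your proof.
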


We now apply the above definitions and Theorems to our scope. 

\begin{lemma}[Rank preservation almost everywhere]
\label{lem:rank}
Let $T : \mathbb{R}^N \to \mathbb{C}^M$ be a generic real analytic map, and let 
$\JT(\Parameters) \in \mathbb{R}^{2M \times N}$ denote its real Jacobian 
under the identification $\mathbb{C}^M \simeq \mathbb{R}^{2M}$.
If there exists $\Parameters_0 \in \mathbb{R}^N$ such that $\rank{\JT(\Parameters_0)} = 2M$,  then $\rank{\JT(\Parameters)} = 2M$ for almost every $\Parameters \in \mathbb{R}^N$.
\end{lemma}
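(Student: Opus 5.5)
The plan is to reduce rank preservation to the classical fact that a real analytic function on $\Reals^N$ which is not identically zero vanishes only on a set of Lebesgue measure zero. Since $T$ is real analytic, every entry of the real Jacobian $\JT(\Parameters)$ is real analytic in $\Parameters$. This holds because the real and imaginary parts of the components of $T$ are real analytic, and partial derivatives of real analytic functions are again real analytic. Consequently, every $2M\times 2M$ minor of $\JT(\Parameters)$ is a polynomial in real analytic functions, and hence itself real analytic on $\Reals^N$.

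By hypothesis, $\rank{\JT(\Parameters_0)}=2M$. This gives an index set $\mathcal{S}\subseteq\{1,\ldots,N\}$ with $|\mathcal{S}|=2M$ such that the minor $g(\Parameters)=\det\left([\JT(\Parameters)]_{:,\mathcal{S}}\right)$ satisfies $g(\Parameters_0)\neq 0$. An equivalent and convenient choice is the Gram determinant $g(\Parameters)=\det\left(\JT(\Parameters)\JT(\Parameters)\transpose\right)$, which is real analytic, nonnegative, and nonzero exactly when the rank equals $2M$; it avoids selecting columns. Either way, $g$ is a real analytic function on the connected domain $\Reals^N$ that is not identically zero. Moreover,
\begin{equation}
\left\{\Parameters : \rank{\JT(\Parameters)}<2M\right\}\subseteq \left\{\Parameters : g(\Parameters)=0\right\},
\end{equation}
so it suffices to show that the zero set $Z(g)$ has measure zero.

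The main step is the zero-set lemma for analytic functions, which I would prove by induction on $N$ using Fubini. For $N=1$, the zeros of a nonzero analytic function on $\Reals$ are isolated by the identity theorem, hence countable and of measure zero. For the inductive step, write $\Parameters=(\Parameter_1,\Parameters')$. Because $g\not\equiv 0$, some Taylor coefficient $c_k(\Parameters')=\partial_{\Parameter_1}^k g(\Parameter_1^\ast,\Parameters')/k!$ at a fixed point $\Parameter_1^\ast$ is a real analytic function of $\Parameters'$ that is not identically zero. By the induction hypothesis, $c_k$ vanishes only on a null set $E\subset\Reals^{N-1}$.

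For $\Parameters'\notin E$, the slice $\Parameter_1\mapsto g(\Parameter_1,\Parameters')$ is a nonzero analytic function of one variable, so its zeros are countable. Fubini's theorem then gives
\begin{equation}
|Z(g)|=\int_{\Reals^{N-1}} |Z(g)\cap(\Reals\times\{\Parameters'\})|\,d\Parameters'=0 .
\end{equation}
This step is the real obstacle, since the identity theorem must be invoked with care to guarantee that a non-vanishing coefficient $c_k$ exists globally. That existence uses the connectedness of $\Reals^N$: if every $c_k$ vanished identically, $g$ would vanish on an open set and hence everywhere. Alternatively, one may cite the standard result (e.g., Mityagin's note, or \cite{Lee:B13}-style references) directly. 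With $Z(g)$ null, the rank is $2M$ almost everywhere, which proves Lemma \ref{lem:rank}.
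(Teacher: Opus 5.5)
Your proposal is correct and follows the paper's argument: a $2M\times 2M$ minor of $\JT$ that is nonzero at $\Parameters_0$ is a real analytic function that is not identically zero, and its zero set contains the rank-deficient set and has measure zero. The only differences are that you prove the zero-set lemma yourself by induction and Fubini, where the paper just cites \cite{Lee:B13}, and that you mention the Gram determinant $\det\left(\JT\JT\transpose\right)$ as an equivalent choice of test function.
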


\begin{proof}
Since $\rank{\JT(\Parameters_0)} = 2M$, there exists at least one  $2M \times 2M$ minor of $\JT(\Parameters)$ whose determinant does not vanish at $\Parameters_0$.  Let $q(\Parameters) = \det{\left [\JT(\Parameters)\right ]_{\mathcal{I},\mathcal{J}}}$ denote such a minor.
Because $T$ is real analytic, all entries of $\JT(\Parameters)$ are real analytic functions of parameters $\Parameters$, and therefore $q(\Parameters)$ is real analytic because the determinant is a polynomial in these entries. Moreover, $q(\Parameters_0) \neq 0$, hence $q(\Parameters)$ is not identically zero. A standard result in real analytic geometry states that the zero set of a nontrivial real analytic function has Lebesgue measure zero \cite{Lee:B13}.  Therefore,
\begin{equation}
\mathcal{S} = \{\Parameters \in \mathbb{R}^N : \rank{\JT(\Parameters)} < 2M\}
\subseteq \{\Parameters : q(\Parameters) = 0\}
\end{equation}
has Lebesgue measure zero. This proves the claim.
\end{proof}

\begin{lemma}[]
\label{lem:sub}
The map $\mathring{T}(\Parameters)$ is a submersion onto the sphere at almost every $\Parameters$.
\end{lemma}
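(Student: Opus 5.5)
The approach is to combine Lemma~\ref{lem:rank} with the explicit form \eqref{eq:proj} of the Jacobian of the normalized map. The target is to show that, at every $\Parameters$ where $\rank{\JT(\Parameters)}=2M$, the differential $\JtT(\Parameters)$ maps $\Reals^N$ onto the tangent space $\Sphere_{\text{t}}$ of $\Sphere^{2M-1}$ at $\mathring{T}(\Parameters)$. Since $\Sphere_{\text{t}}$ has dimension $2M-1$, this is exactly the statement that $\mathring{T}$ is a submersion at $\Parameters$. By Lemma~\ref{lem:rank}, the set of such $\Parameters$ has full measure, which yields the claim.

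The key steps are as follows.
\begin{enumerate}[label=(\roman*)]
\item Check that $\mathring{T}=s\circ T$ is real analytic, and hence smooth. This holds because $T$ is analytic and $T(\Parameters)\neq 0$ everywhere, so $\|T(\Parameters)\|$ is a nonvanishing analytic function.
\item Verify \eqref{eq:proj} by the chain rule. The differential of $s$ at $\boldy\neq 0$ is
\[
\frac{1}{\|\boldy\|}\left(\boldI_{2M}-s(\boldy)s(\boldy)\transpose\right),
\]
which has rank $2M-1$. Its image is precisely the orthogonal complement of $s(\boldy)$, i.e., $\Sphere_{\text{t}}$, and its kernel is the span of $s(\boldy)$.
\item Fix $\Parameters$ outside the null set $\mathcal{S}$ of Lemma~\ref{lem:rank}. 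There $\JT(\Parameters)$ is surjective onto $\Reals^{2M}$. Then
\[
\image{\JtT(\Parameters)}=\frac{1}{\|T\|}\left(\boldI_{2M}-\mathring{T}\mathring{T}\transpose\right)\Reals^{2M}=\Sphere_{\text{t}},
\]
so $\rank{\JtT(\Parameters)}=2M-1=\dim \Sphere^{2M-1}$. Here rank should be understood as that of the linear map into the tangent space, not as the $2M\times N$ matrix rank viewed in $\Reals^{2M}$.
\item Conclude that the set of non-submersion points of $\mathring{T}$ is contained in $\mathcal{S}$, and therefore has Lebesgue measure zero.
\end{enumerate}

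The step I expect to need the most care is the bookkeeping in (iii): the ambient Jacobian $\JtT$ is a $2M\times N$ matrix of rank at most $2M-1$, so "full rank" must be interpreted relative to the sphere's tangent space. I would state this explicitly via the identification $T_{\mathring{\boldy}}\Sphere^{2M-1}=\Sphere_{\text{t}}$.

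One could also note that the converse inclusion holds. If $\JT$ has rank $2M-1$ with $\mathring{T}\notin\image{\JT}$, then $\mathring{T}$ can still be a submersion. This only enlarges the good set, so it does not affect the claim, and I would not pursue it.
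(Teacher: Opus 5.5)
Your proposal is correct and follows the same route as the paper. Lemma~\ref{lem:rank} gives $\rank{\JT(\Parameters)}=2M$ almost everywhere, and the tangent projection in \eqref{eq:proj} then yields a differential of rank $2M-1=\dim\Sphere^{2M-1}$; your step (iii), which shows the image is exactly $\Sphere_{\text{t}}$, makes rigorous what the paper states as ``the projection reduces the rank by at most one.'' One small slip in your closing aside: your example shows that the converse inclusion \emph{fails}, not that it holds, since the non-submersion points may form a proper subset of $\mathcal{S}$; this does not affect the argument.
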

\begin{proof}
From Lemma \ref{lem:rank}, $\rank{\JT(\Parameters)}=2M$ at almost every $\Parameters$. The projection in \eqref{eq:proj} reduces the rank by at most one (along the radial direction), giving $\rank{\JtT(\Parameters)}=2M-1$ at almost every $\Parameters$. 
\end{proof}

\begin{figure}[!t]
\centering 
\centering\includegraphics[width=1\columnwidth]{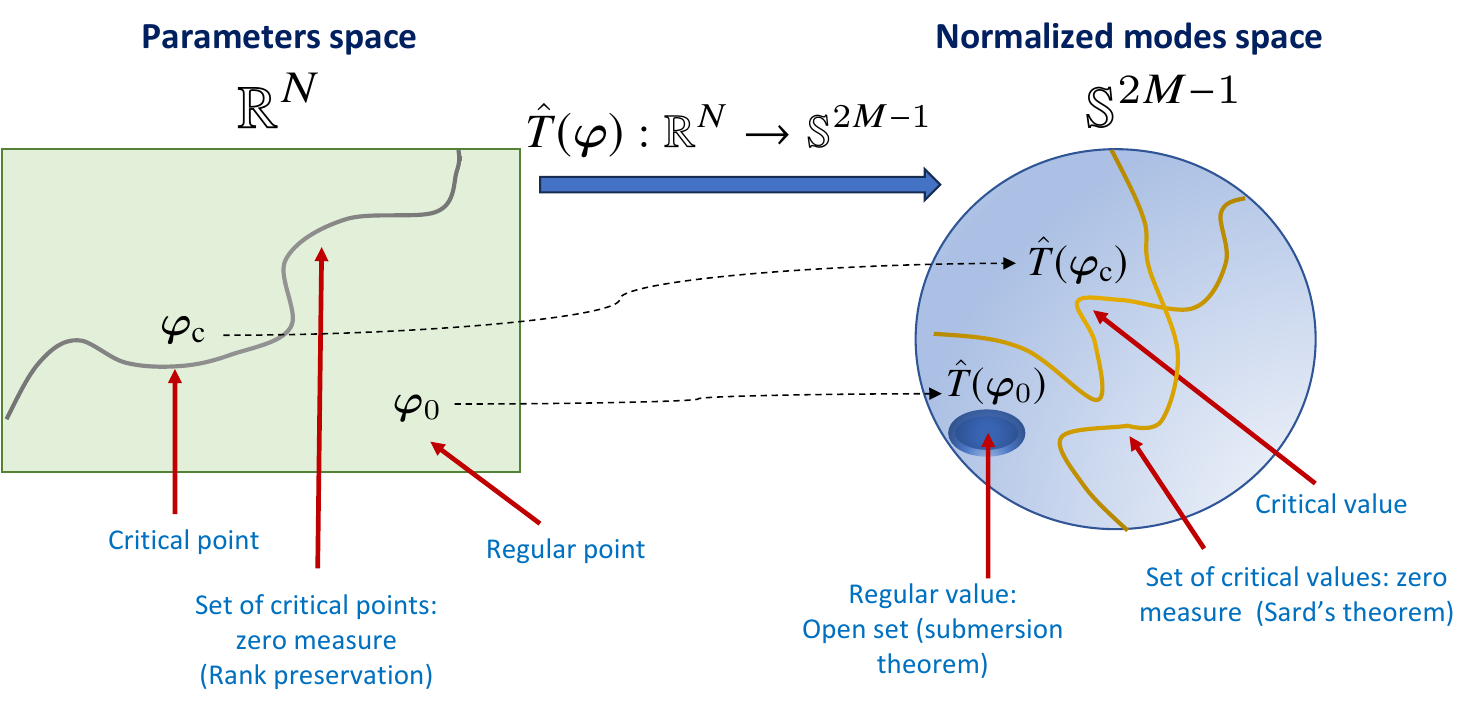}
\caption{Graphical sketch of the proof of Theorem \ref{thm:main1}.} 
\label{Fig:Theorem}
\end{figure}

We are now in the position to prove Theorem \ref{thm:main1}, whose graphical sketch is summarized in Fig. \ref{Fig:Theorem}. 
  
\begin{proof}[Proof of Theorem \ref{thm:main1}] 
From the hypothesis of Theorem \ref{thm:main1}, $T(\Parameters)$ is real analytic and there exists at least $\Parameters_0$ such that $\rank{\boldJ_{T}(\Parameters_0)}=2M$. From Lemma \ref{lem:rank}, this property holds for all $\Parameters$ outside a set of measure zero, i.e., the Jacobian is full rank almost everywhere. 
From the submersion Theorem \ref{thm:sub}, if $\mathring{T}$ is a submersion at $\Parameters$, then $\mathring{T}$ maps a neighborhood of $\Parameters$ in $\Reals^N$ to an open set in $\Sphere^{2M-1}$ (local openness). Moreover, from Lemma \ref{lem:sub}, $\rank{\JtT(\Parameters)}=2M-1$ at almost every $\Parameters$.

The Sard's theorem \ref{thm:Sard} states that the set of critical values has measure zero in $\Sphere^{2M-1}$. Thus, almost every $\mathring{\boldy} \in \Sphere^{2M-1}$ is a regular value and the image of $\mathring{T}$ contains an open set in $\Sphere^{2M-1}$ almost everywhere. 
Let $\mathcal{U} \subset \image{\mathring{T}}$ denotes the union of all open neighborhoods corresponding to regular values. Then $\mathcal{U}$ is an open subset of $\Sphere^{2M-1}$, and its complement has measure zero.
In fact, by the definition of closure,
\begin{equation} \label{eq:closure}
\overline{\image{\mathring{T}}} = \{ \mathring{\boldy} \in \Sphere^{2M-1} : \exists\, \mathring{\boldy}_k \in \image{\mathring{T}},\, \mathring{\boldy}_k \to \mathring{\boldy} \} \, .
\end{equation}
Since $\mathcal{U}$ is dense in $\Sphere^{2M-1}$, for any $\mathring{\boldy} \in \Sphere^{2M-1}$ there exists a sequence $\{\mathring{\boldy}_k\} \subset \mathcal{U} \subset \image{\mathring{T}}$ such that $\mathring{\boldy}_k \to \mathring{\boldy}$. Therefore, $\mathring{\boldy} \in \overline{\image{\mathring{T}}}$ and because $\mathring{\boldy}$ was arbitrary, $\overline{\image{\mathring{T}}} = \Sphere^{2M-1}$. 
In other words, every $\mathring{\boldy} \in \Sphere^{2M-1}$ can be approached with any desired accuracy, i.e.,    
\eqref{eq:sure} holds.
\end{proof}

\emph{Remark}:  
Exact surjectivity of $\mathring{T}$ is not guaranteed, since some measure-zero critical values may never be attained exactly. However, this result ensures that the image is dense on the sphere. We obtain a dense controllability of the radiation pattern, i.e., each point is reachable arbitrarily closely, which is sufficient for practical applications such as arbitrary radiation pattern synthesis.

\subsection{Proof of Theorem \ref{thm:main2}}
\label{Sec:proof2}

Before proceeding with the proof of Theorem \ref{thm:main2}, consider the following lemma:

\begin{lemma}
\label{thm:invertibility}
For all $\Parameters \in \Reals^N$, the function $\boldA(\Parameters)=\jmath\, \diag{\Parameters} + \bRd+ \boldZ$ is invertible.
\end{lemma}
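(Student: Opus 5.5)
The plan is to argue by contradiction through the Hermitian (real) part of the matrix. Suppose $\boldA(\Parameters)\,\boldi=\boldzero$ for some $\boldi\in\Complex^N$. Left-multiplying by $\boldi\ctranspose$ gives
\begin{equation*}
\boldi\ctranspose \boldA(\Parameters)\,\boldi=\jmath\,\boldi\ctranspose\diag{\Parameters}\boldi+\Rd\|\boldi\|^2+\boldi\ctranspose\boldR\,\boldi+\jmath\,\boldi\ctranspose\boldX\,\boldi=0 .
\end{equation*}
The next step is to separate real and imaginary parts. Since $\diag{\Parameters}$, $\boldR$ and $\boldX$ are real symmetric, the quadratic forms $\boldi\ctranspose\diag{\Parameters}\boldi$, $\boldi\ctranspose\boldR\boldi$ and $\boldi\ctranspose\boldX\boldi$ are all real. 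Hence the real part of the identity reads $\Rd\|\boldi\|^2+\boldi\ctranspose\boldR\,\boldi=0$. Because $\boldR\succeq0$, both terms are nonnegative, so each must vanish. Provided $\Rd>0$, this forces $\boldi=\boldzero$, which establishes invertibility for every $\Parameters\in\Reals^N$. Equivalently, the Hermitian part $\bRd+\boldR$ of $\boldA(\Parameters)$ is positive definite, and a matrix with positive definite Hermitian part is nonsingular.

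The main obstacle is the lossless case $\Rd=0$, where the argument needs $\boldR\succ0$ rather than merely $\boldR\succeq0$. I would handle it in one of two ways. The first option is to state the standing assumption that $\Rd>0$, which is physically reasonable since any real load or antenna has some loss. The second option is to invoke the physical fact that for a finite array of distinct radiating elements the radiated power $\boldi\ctranspose\boldR\boldi$ is strictly positive for any nonzero current, because no nontrivial finite current distribution is nonradiating; this gives $\boldR\succ0$. Under either option, the real part of $\boldi\ctranspose\boldA\boldi$ is strictly positive for $\boldi\neq\boldzero$. I would also note, for later use in the proof of Theorem \ref{thm:main2}, that as a consequence $T(\Parameters)$ in \eqref{eq:Tx} is well defined and real analytic in $\Parameters$, since its entries are rational functions with nonvanishing denominator $\det\boldA(\Parameters)$.
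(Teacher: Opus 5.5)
Your proof is correct and is essentially the paper's argument: the paper also argues by contradiction that $\Real{\boldu\ctranspose\boldA(\Parameters)\boldu}=\boldu\ctranspose(\boldR+\bRd)\boldu>0$ for $\boldu\neq\boldzero$, using that $\boldR$, $\boldX$ and $\diag{\Parameters}$ are real symmetric. The paper simply asserts $\boldR+\bRd\succ0$, which tacitly needs $\Rd>0$ (its numerical setup uses $\Rd=0.1\,\Omega$), so making this hypothesis explicit is a fair sharpening; of your two options prefer $\Rd>0$, since $\boldR\succ0$ rests on an unproved physical assertion and fails under the paper's own mode-truncated model $\boldR=\frac{1}{\eta\kappa^2}\boldF\ctranspose\boldF$, whose rank is at most $M<N$.
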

\begin{proof}
For convenience, define $\bRt=\boldR+\bRd$ so that $\boldA(\Parameters)=\bRt + \jmath\, (\diag{\Parameters} +\boldX)$. By contracdiction, assume that $\boldA(\Parameters) \, \boldu = 0$ for some $\boldu \in \Complex^N \setminus \{0\}$, i.e., it is not invertible.  Since  $\bRt$ and $(\diag{\Parameters}+  \boldX)$ are real symmetric and hence Hermitian matrices, we have $\boldu\ctranspose \bRt \boldu \in \Reals$ and $\boldu\ctranspose (\diag{\Parameters}+  \boldX) \boldu \in \Reals$, then $\Re(\boldu\ctranspose \boldA(\Parameters) \boldu) = \boldu\ctranspose \bRt \boldu>0$ for all $\Parameters \in \Reals^N$ because $\bRt \succ 0$, but this is in contradiction with the assumption, then $\boldA(\Parameters)$ is invertible.
\end{proof}

\begin{corollary}
The function  $T(\Parameters)=\boldF \, \boldA(\Parameters) \, \boldv$ is real analytic, and hence all partial derivatives exist and are continuous.
\end{corollary}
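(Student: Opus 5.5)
Note first that the corollary as typed writes $T(\Parameters)=\boldF\,\boldA(\Parameters)\,\boldv$; by \eqref{eq:Tx} the intended map is $T(\Parameters)=\boldF\,\boldA(\Parameters)^{-1}\boldv$, and I will prove analyticity of that map. (The literal product $\boldF\boldA(\Parameters)\boldv$ is affine in $\Parameters$, hence trivially analytic.)

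The plan is to write $T$ as a composition of maps that are each real analytic, with the inversion step justified by Lemma~\ref{thm:invertibility}. The map $\Parameters \mapsto \boldA(\Parameters)=\jmath\,\diag{\Parameters}+\bRd+\boldZ$ is affine in $\Parameters$. Under $\Complex\simeq\Reals^2$, each real and imaginary part of each entry is a polynomial of degree at most one in $\Parameter_1,\ldots,\Parameter_N$, so this map is real analytic from $\Reals^N$ to $\Complex^{N\times N}\simeq\Reals^{2N^2}$.

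For the inverse, I use Cramer's rule: $\boldA(\Parameters)^{-1}=\operatorname{adj}(\boldA(\Parameters))/\det\boldA(\Parameters)$. Every entry of the adjugate is a polynomial in the entries of $\boldA(\Parameters)$, and so is $\det\boldA(\Parameters)$. Splitting into real and imaginary parts, each entry of the inverse is a quotient $p(\Parameters)/q(\Parameters)$. Here $p$ is a complex polynomial in $\Parameters$ and $q=\det\boldA(\Parameters)$ is also a polynomial. Multiplying numerator and denominator by $\overline{q}$ gives a ratio of real polynomials with denominator $|q(\Parameters)|^2$.

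The key point, and the only real obstacle, is that this denominator never vanishes on all of $\Reals^N$. This is exactly Lemma~\ref{thm:invertibility}. It holds because $\bRt=\boldR+\bRd\succ0$, which requires $\Rd>0$ or $\boldR\succ0$; I would state that hypothesis explicitly. A rational function with nowhere-vanishing denominator is real analytic on its whole domain. Hence each entry of $\boldA(\Parameters)^{-1}$ is real analytic on $\Reals^N$, not merely away from a singular set.

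Finally, $T(\Parameters)=\boldF\,\boldA(\Parameters)^{-1}\boldv$ is obtained from $\boldA(\Parameters)^{-1}$ by multiplying by constant matrices. Its real and imaginary components are therefore finite real-linear combinations of real analytic functions, and so are real analytic. Real analytic functions are $C^\infty$, so all partial derivatives exist and are continuous. In particular, the real Jacobian $\JT(\Parameters)$ is well defined and real analytic, which is what Theorem~\ref{thm:main1} needs.

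Optionally, I would also record the explicit derivative $\partial T/\partial\Parameter_n=-\jmath\,\boldF\boldA^{-1}\boldE_n\boldA^{-1}\boldv$. It follows from differentiating the identity $\boldA\boldA^{-1}=\boldI_N$, and it will be useful when evaluating the rank of the Jacobian in the proof of Theorem~\ref{thm:main2}.
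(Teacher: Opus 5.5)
Your proposal is correct and follows the same route as the paper: Lemma~\ref{thm:invertibility} makes $\boldA(\Parameters)$ invertible for every $\Parameters\in\Reals^N$, so the entries of $T(\Parameters)=\boldF\,\boldA(\Parameters)^{-1}\boldv$ are rational functions of $\Parameters$ with nowhere-vanishing denominator, hence real analytic. You are right that the statement should read $\boldA(\Parameters)^{-1}$. Your Cramer's-rule bookkeeping and your remark that the lemma needs $\Rd>0$ (or $\boldR\succ0$) make explicit what the paper's one-line proof leaves implicit.
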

\begin{proof}
From Lemma \ref{thm:invertibility}, $\boldA(\Parameters)$ is invertible for all $\Parameters \in \Reals^N$ and the elements of $T(\Parameters)$ are rational functions of $\Parameters$.
\end{proof}

In preparation for the proof of Theorem \ref{thm:main2}, we compute the complex Jacobian of $T(\Parameters)$. 
Using the matrix derivative identity
\begin{equation}
\frac{\partial \boldA(\Parameters)^{-1}}{\partial \Parameter_n} = - \boldA(\Parameters)^{-1} \frac{\partial \boldA(\Parameters)}{\partial \Parameter_n} \boldA(\Parameters)^{-1}, 
\quad 
\frac{\partial \boldA(\Parameters)}{\partial \Parameter_n} = \jmath \boldE_n 
\end{equation}

we get
\begin{align}
    \frac{\partial T(\Parameters)}{\partial \Parameter_n} &= - \jmath \boldF \boldA(\Parameters)^{-1} \boldE_n \boldA(\Parameters)^{-1}  \boldv  \nonumber \\
    &= - \jmath \left [\boldA(\Parameters)^{-1} \boldv \right ]_n \boldF \left [\boldA(\Parameters)^{-1}\right ]_{:,n} \, .
\end{align}

Thus, the complex Jacobian is
\begin{equation} \label{eq:JT}
\JTc(\Parameters) = -\jmath \boldF \boldA(\Parameters)^{-1} \diag{\boldA(\Parameters)^{-1} \boldv}  \, .
\end{equation}

\begin{proof}[Proof of Theorem \ref{thm:main2}] 
Since the mapping in \eqref{eq:Tx} is real analytic and $\rank{\JT(\Parameters)}=2\, \rank{\JTc(\Parameters)}$, to satisfy the hypothesis of Theorem \ref{thm:main1}, it is sufficient to show that its (complex or real) Jacobian has full rank for $\Parameters_0=\boldzero$. In this case, $\boldA(\boldzero)=\bRd+\boldZ$ and the Jacobian in \eqref{eq:JT} becomes
\begin{align} \label{eq:JT1}
    \JTc(\boldzero) &= - \jmath \boldF (\bRd+\boldZ)^{-1} \diag{(\bRd+\boldZ)^{-1} \boldv}  
\end{align}
thus the condition \eqref{eq:suff} follows.
\end{proof}


\section{2D and 3D Unbounded Observation Spaces}
\label{Sec:Modes}
We derive the mapping matrix $\boldF$, which links the currents of the $N$ antenna elements to the modes, for 2 cases of large interest: 3D and 2D structures observed in unbounded space. 

\subsection{3D Spherical Radiation Mapping Matrix $\boldF$}
\label{Sec:3D}

The electrical field outside the sphere of radius $a$ containing the \ac{DSA} can be described through the outgoing vector spherical waves expansion, which includes two sets of basis functions, respectively, for the electric (TE) and magnetic (TM) modes \cite{HarringtonBook:2001,GusJeelSchCap:22} as
\begin{equation}
\Em(\boldr)= \sum_{\ell=1}^{\infty} \sum_{q=-\ell}^{\ell} 
\left[ a_{\ell,q} \, \EMsymb{N}_{\ell,q}^{(3)}(\kappa \boldr) 
+ b_{\ell,q} \, \EMsymb{M}_{\ell,q}^{(3)}(\kappa \boldr ) \right]
\end{equation}
where $\kappa=2\pi/\lambda$ is the wave number and $\EMsymb{N}_{\ell,q}^{(3)}$ and $\EMsymb{M}_{\ell,q}^{(3)}$ are the TM and magnetic TE vector outgoing spherical waves, respectively.

We simplify the scenario by considering the \ac{DSA} to be composed of radial-oriented infinitesimal dipoles. This is a typical approach used in the literature to obtain high-level insight about the behavior of the system without the complexity introduced by polarization \cite{IvrNos:10}. The analysis comprising polarization effects is left for future work.
As a consequence, the vector spherical expansion collapses to a scalar spherical harmonic expansion, because the field is purely TM. 
In particular, for a single dipole of length $\Delta$ at position $\boldr_n$, the vector spherical wave coefficients are
\begin{equation}
a_{\ell,q}^{(n)} =\frac{\kappa \eta \Delta \ell (\ell +1)}{4 \pi}    \, j_\ell(\kappa r_n) 
\, Y_{\ell,q}^*(\hat{\boldr}_n) \, , \qquad
b_{\ell,q}^{(n)} = 0
\end{equation}
where $r_n = \| \boldr_n\|$, $j_\ell(\cdot)$ is the spherical Bessel function of order $l$, and $Y_{\ell,q}(\hat{\boldr}_n)$ is the scalar spherical harmonic \cite{HarringtonBook:2001}. Therefore, it is $a_{\ell,q}=\sum_{n=1}^N a_{\ell,q}^{(n)}$. 

Truncating the outer sum to a certain value $L_\mathrm{max}$, the actual number of considered modes is $M=(L_\mathrm{max}+1)^2$ corresponding to a maximum directivity $D \simeq M$ \cite{Har:58}.
From the Chu-Harrington limit, the number $M_{\text{r}}$ of dominant radiating spherical modes supported by a region of radius $a$ satisfies \cite{Har:58,KilMarMac:17}
\begin{equation}
    M_{\text{r}}=M_{\text{3D}} \approx (\kappa a)^2 \, 
\end{equation}
therefore it must be $M\ge M_{\text{r}}$. Note that $D=M_{\text{r}} \approx (\kappa a)^2=\frac{4 \pi}{\lambda^2} A$ is the maximum available directivity for a non-super directive aperture of area $A=\pi a^2$ \cite{KilMarMac:17}. 
If $M>M_{\text{r}}$, then also evanescent modes are included, and superdirective patterns can be obtained owing to the higher number of available \ac{DoF} at the expense of higher $Q$ factors and induced currents at the scatterers.

The corresponding mapping matrix $\boldF \in \Complex^{M \times N}$ in the expansion \eqref{eq:expansion} has elements $f_{m, n} = a_{\ell,q}^{(n)} $ and $\Psim_m(\boldr)=\EMsymb{N}_{\ell,q}^{(3)}(\kappa \boldr)$, with $m=2(\ell-1)+q + \ell +2$, and $\ell=1,2, \dots,L_\mathrm{max}$, and $\quad q=-\ell,\dots, 0, \ldots, \ell$. 
Each row of $\boldF$ is associated with a spherical harmonic mode, and each column with an antenna element. The rank of $\boldF$ determines the number of independent radiated \ac{DoF}.
It is interesting to note that the radiated power can be written as
\begin{equation}
\Prad=\EX{\boldi\ctranspose \boldR \boldi}= \frac{1}{\eta \kappa^2}\EX{\boldy\ctranspose \boldy}= \frac{1}{\eta \kappa^2}\EX{\boldi\ctranspose \boldF\ctranspose \boldF \boldi}
\end{equation}
from which the following relationship between $\boldR$ and $\boldF$ can be inferred: $\boldR= \frac{1}{\eta \kappa^2} \boldF\ctranspose \boldF$.
The null space of $\boldF$ corresponds to the reactive modes (not radiating).
In general, it is  $\boldZ=\frac{1}{\eta \kappa^2} \boldF\ctranspose \boldF  + \boldZ_{\text{res}}$, where $\boldZ_{\text{res}}$ contains reactive contributions.


\begin{figure}[!t]
\centering\includegraphics[width=0.6\columnwidth,trim={0 0 0 2cm},clip]{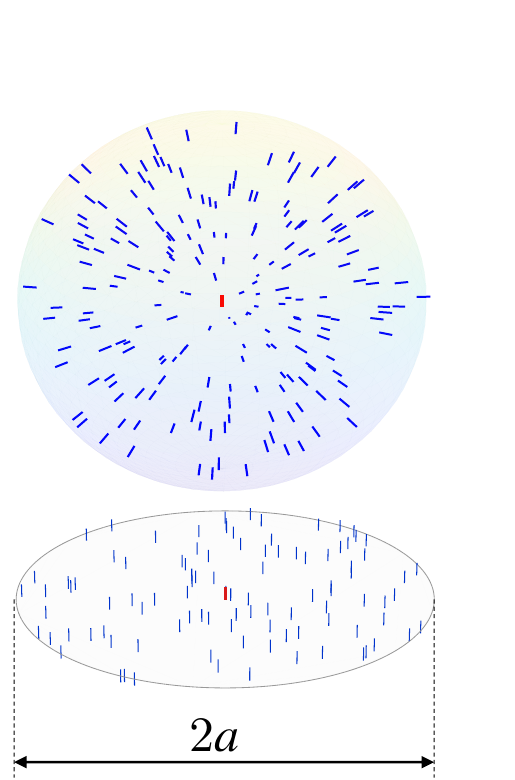}
\caption{Randomly distributed dipoles within a sphere (top) and a disk (bottom).} 
\label{fig:RandomDipoles}
\end{figure}

\subsection{2D Disk Radiation Mapping Matrix $\boldF$}
\label{Sec:2D}

For dipoles distributed on a disk of radius $a$ in the $x$-$y$ plane, the distance-normalized far field on the plane is proportional to 
\begin{equation}
    \Em(\phi) \propto \sum_{n=1}^N i_n \, e^{\jmath \kappa  r_n \cos(\phi-\phi_n)}  \
\end{equation}
with  $\phi_n = \angle \boldr_n$.
A basis set for the beam space domain is given by the Fourier series in the angular modes $\left \{ e^{\jmath m \phi} \right \}$. 
Using the Fourier-Bessel expansion, we have
\begin{equation}
    e^{\jmath \kappa  r_n \cos(\phi-\phi_n)} = \sum_{\ell=-\infty}^{\infty} \jmath^\ell J_\ell(\kappa r_n) \, e^{\jmath \ell (\phi-\phi_n)} 
\end{equation}
 where $J_\ell(\cdot)$ represents the $\ell$th order Bessel function of the first kind. By truncating the series to the value $L_{\text{max}}$, the elements of the mapping matrix $\boldF$ are
\begin{equation}
    f_{m,n} = \jmath^\ell J_{\ell}(\kappa r_n) \, e^{-\jmath \ell \phi_n}, \quad m=1, 2, \dots, M 
\end{equation}
where $\ell=m-L_{\text{max}}-1$,  giving a total of $M= 2 L_{\text{max}}+1$ modes.
In this case, since $J_{\ell}(x) \approx 0$ for $|\ell|>x$, the number of well-radiating modes is approximately given by $M_{\text{r}}=M_\mathrm{2D}  \approx 2 \kappa a+1$ and the directivity is $D \approx M \, D_{\text{vertical}}$, being $D_{\text{vertical}}$ the directivity along the vertical direction.

It is worth noticing that for randomly distributed dipoles the condition $N\ge 2M$, with $M$ taken equal to $M_{\text{2D}}$ or $M_{\text{3D}}$, leads to an average nearest neighbor distance between elements $\lesssim  \frac{\lambda}{4}\left (\frac{a}{\lambda} \right )^{\frac{1}{d}}$, where $d=2$ and $d=3$ for the 2D and 3D cases, respectively. This implies that for a small \ac{DSA} ($a \approx \lambda$), the average minimum distance between elements should be $ \lesssim \lambda/4$. In contrast, a larger spacing (smaller density) can be used for a larger \ac{DSA}, provided that the minimum cardinality condition of the set $\mathcal{I}$, associated with the vector $\tilde{\boldv}$ defined in Sec. \ref{Sec:proof2}, is satisfied (source illumination condition). Interestingly, $\lambda/4$ was also the optimal spacing found through numerical simulations in \cite{Dar:J26}.

\begin{figure}[!t]
\centering\includegraphics[width=1\columnwidth]{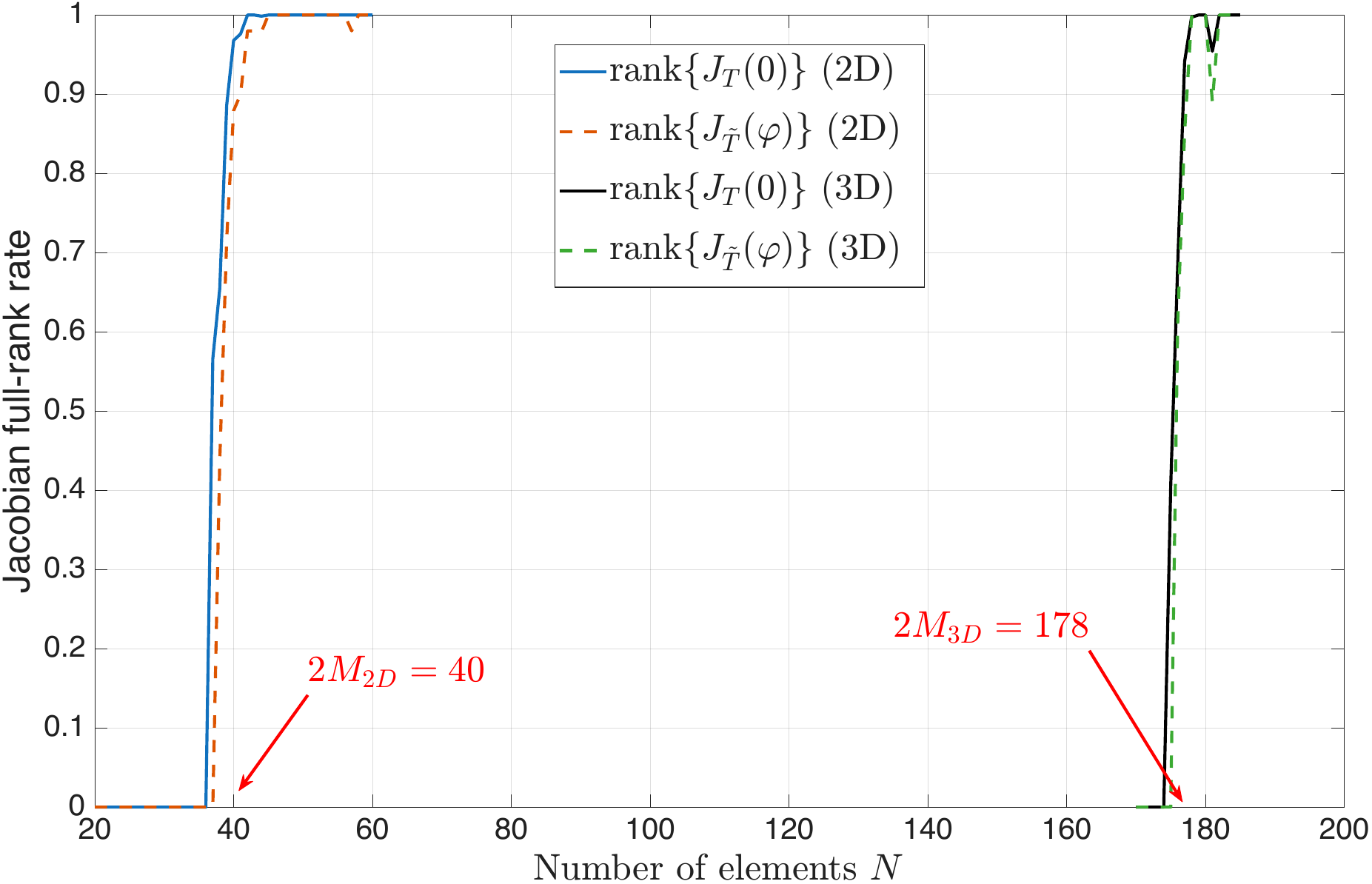}
\caption{Numerical validation of Lemma \ref{lem:rank}. } 
\label{fig:Jacobian}
\end{figure}

\section{Numerical Results}
\label{Sec:NumericalResults}

In this section, we present some numerical results to validate the theoretical framework illustrated in the previous sections. 

As an intermediate step, we begin by numerically verifying the validity of Lemma \ref{lem:rank} by examining the correlation between condition \eqref{eq:suff} (linked to \eqref{eq:JT1}), which is independent of the parameter set $\Parameters$, and the rank of $\JtT(\Parameters)$. This analysis is carried out on a large number of randomly generated parameter sets (400) and geometries (50), and computing the rate at which the rank is equal to $2M$. The rank is computed using a relative tolerance of $10^{-5}$.
Specifically, we consider $N$ Hertzian dipoles randomly distributed within a circle of radius $a = 1.5\lambda$ (2D case) and within a sphere of the same radius (3D case) (see Fig. \ref{fig:RandomDipoles}). The matrix $\boldF$ is computed using the series expansions described in Sec. \ref{Sec:2D} and Sec. \ref{Sec:3D}, respectively. We set $\Rd = 0.1\,\Omega$, and the impedance matrix $\boldZ$ is evaluated analytically as in \cite{BalB:16}. The active element is placed at the center.
The corresponding number of well-radiating modes is $M = M_{\text{r}} = M_{\mathrm{2D}} \approx 2\kappa a + 1 = 20$ in the 2D case, and $M = M_{\text{r}} = M_{\mathrm{3D}} \approx (\kappa a)^2 = 89$ in the 3D case.

Fig. \ref{fig:Jacobian} shows the full rank rate of $\JtT(\Parameters)$, conditioned to \eqref{eq:suff}, and the full rank rate of \eqref{eq:suff}. As predicted by Lemma \ref{lem:rank}, a strong agreement can be noticed. Moreover, we observe an abrupt change in slope around $N \approx 2M$, from which we can conjecture that, for random geometries and with high probability, the condition $N \geq 2M$ is not only necessary but also sufficient to ensure \eqref{eq:suff}. Consequently, by Theorem \ref{thm:main2}, $\hat{T}(\Parameters)$ is almost everywhere surjective, i.e., the \ac{DSA} is densely controllable. 

The same experiment, repeated with $N$ elements uniformly distributed over a circle of radius $a$, corresponding to a typical \ac{ESPAR} configuration, yields a full-rank rate equal to zero over the considered range of $N$. Therefore, no conclusions can be drawn regarding the dense controllability of this configuration. However, the following numerical examples reveal the presence of an approximation error floor, suggesting that the system is not densely controllable.

\begin{figure}[!t]
\centering\includegraphics[width=1\columnwidth]{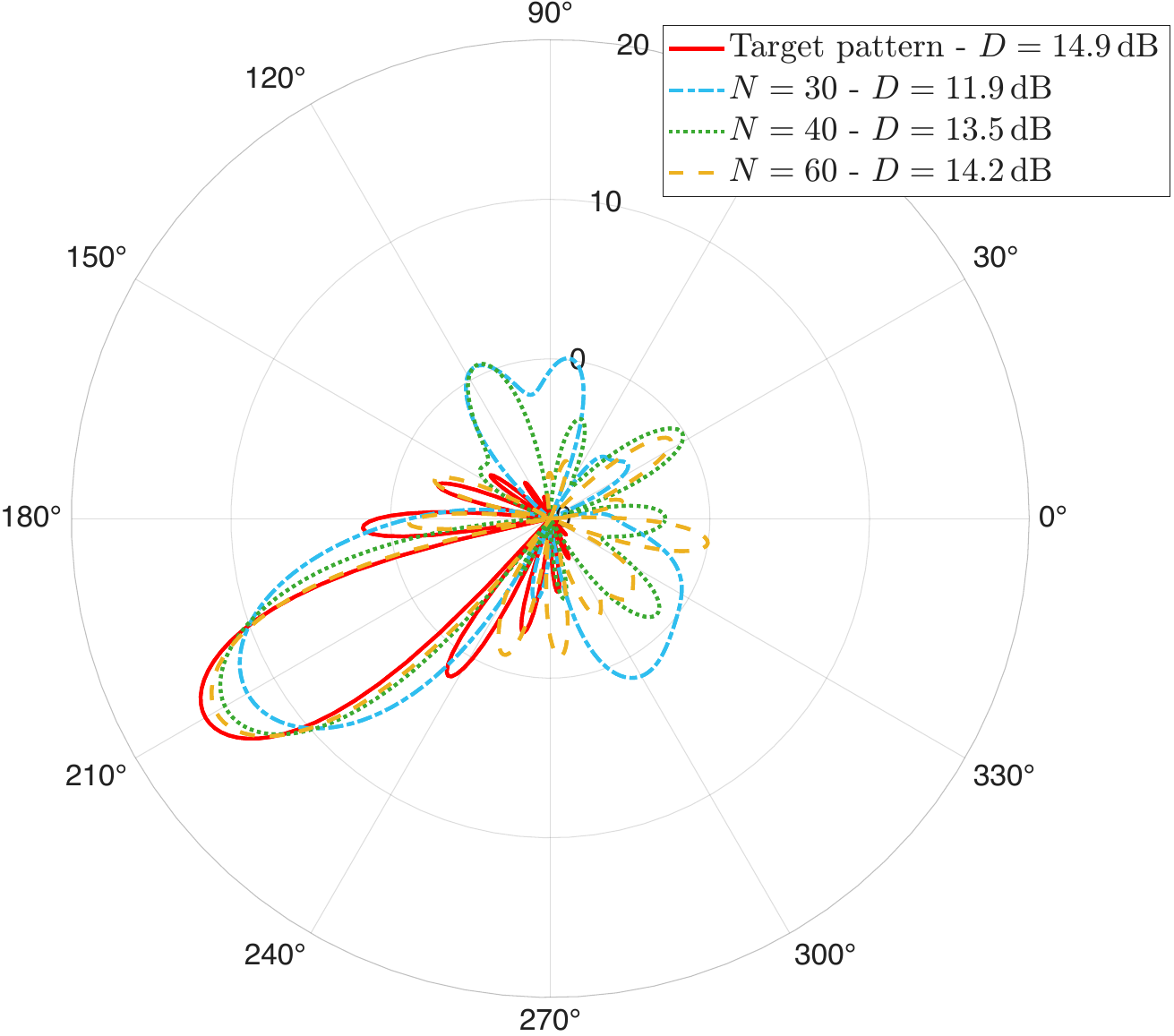}
\caption{Radiation pattern and directivity values $D$ as a function of the number of elements for a target single beam toward angle $\phi_1=210^{\circ}$. $M=20$.} 
\label{fig:Pattern}
\end{figure}

Numerically validating surjectivity is a prohibitive task. However, useful insights can be obtained by optimizing multiple random geometric configurations of the \ac{DSA} for a given desired response and evaluating the achieved accuracy.
To this end, we consider the 2D scenario described above and fix a target response $\mathring{\boldy}$. For the $i$th geometric configuration, the following optimization problem is solved numerically using a standard quasi-Newton method with 5000 iterations:
\begin{equation} \label{eq:min}
\Parameters^{(i)}=\arg \min_{\Parameters} \frac{\| \mathring{T}(\Parameters)-\mathring{\boldy} \|^2}{M} \, .
\end{equation}

The objective function in \eqref{eq:min} corresponds to the normalized \ac{MSE} achieved with the parameter set $\Parameters$. The solution $\Parameters^{(i)}$ thus defines the $i$th configuration of the parameters of the \ac{DSA}.

Some representative examples are shown in Figs. \ref{fig:Pattern}, \ref{fig:Patternbis}, and \ref{fig:PatterSuper}, where radiation patterns are reported to provide a qualitative assessment of the pattern approximation accuracy as the number of elements $N$ increases. These results are obtained by considering a snapshot of vertically oriented half-wave dipoles randomly distributed within a circle of radius $a$, with the impedance matrix $\boldZ$ analytically evaluated according to the coupling model in \cite{BalB:16}.

\begin{figure}[!t]
\centering\includegraphics[width=0.91\columnwidth]{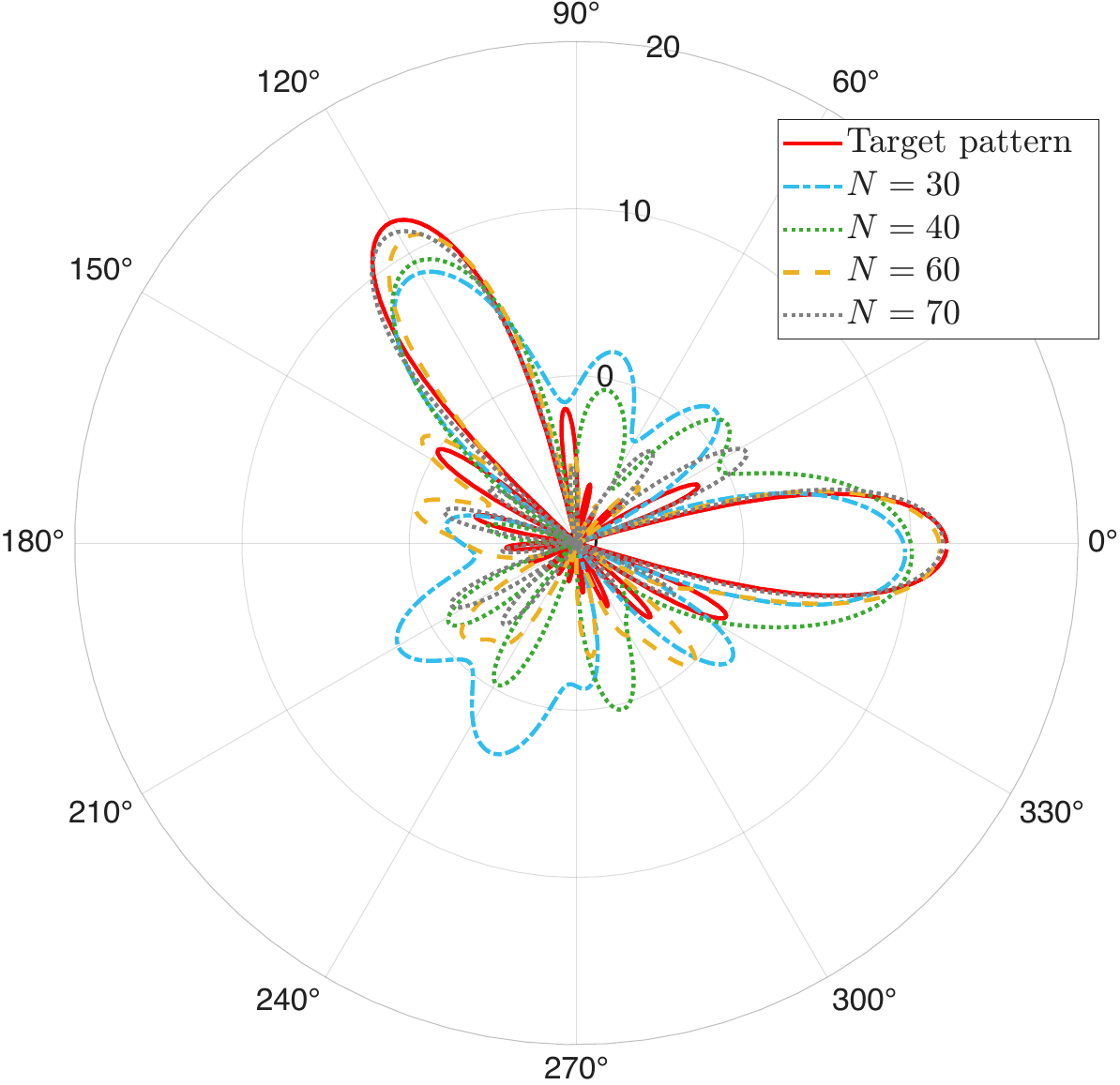}
\caption{Radiation pattern as a function of the number of elements for a target double beam toward angles $\phi_1=0^{\circ}$ and $\phi_2=120^{\circ}$. $M=20$.} 
\label{fig:Patternbis}
\end{figure}

Specifically, in Fig. \ref{fig:Pattern}, the target response corresponds to a beam pointing toward the angle $\phi_1 = 210^{\circ}$, i.e., $[\mathring{\boldy}]_m = e^{-\jmath \ell \phi_1}$, with $m = 1, 2, \ldots, M_{\text{r}}$ and $\ell = m - (M_{\text{r}}-1)/2 - 1$. The maximum available directivity is $D \approx 1.64\, M_{\text{r}}=15.1\,$dB, being $1.64$ the directivity of the half-wave length dipole and $M_{\text{r}}=M_{\text{2D}}$. 
As observed, when $N > 2M_{\text{r}} = 40$, the resulting radiation pattern closely matches the target. The power efficiency $\eta_{\text{d}}$ ranges between 0.90 and 0.99 for the values of $N$ considered.
A more elaborate double-beam target radiation pattern is considered in Fig. \ref{fig:Patternbis}, using the same setup as in the previous figure. Also in this case, the accuracy becomes satisfactory when $N > 40$.

\begin{figure}[!t]
\centering\includegraphics[width=1\columnwidth]{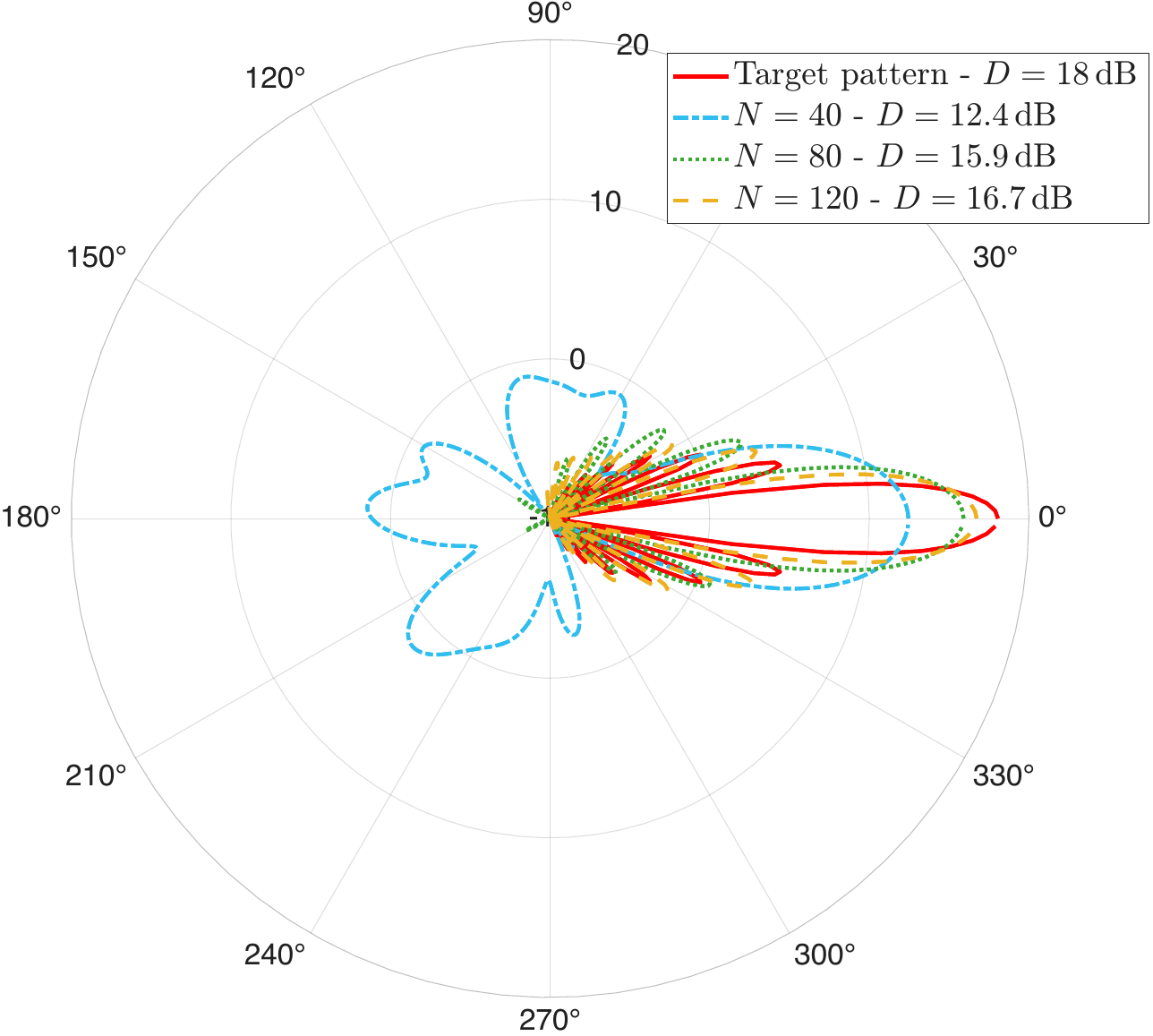}
\caption{Superdirective radiation pattern and directivity values $D$ as a function of the number of elements for a target single beam toward angle $\phi_1=0^{\circ}$. $M=40$.} 
\label{fig:PatterSuper}
\end{figure}

A particularly interesting scenario is investigated in Fig. \ref{fig:PatterSuper}, where $M = 40$, i.e., larger than $M_{\text{r}} = 20$. The target beam pointing toward the angle $\phi_1 = 0^{\circ}$ exhibits superdirective behavior, as several evanescent modes are excited (target directivity $D=18\,$dB versus $D=14.9\,$dB in Fig. \ref{fig:Pattern}). In this case, the power efficiency $\eta_{\text{d}}$ drops below 0.9 for $N > 80$, reaching values as low as 0.75. This reflects significantly increased induced currents in the scatterers' loads, which in turn lead to higher power dissipation due to $\Rd$. Larger values of $N$ are required to achieve satisfactory accuracy due to the increased value of $M$.

Finally, in Fig. \ref{fig:MSE}, the normalized \ac{MSE}, averaged over 30 random geometric configurations, is reported as a function of the number of elements $N$ for the cases considered in the previous figures. As expected, the \ac{MSE} decreases with $N$ in a random deployment.
For comparison, the case where the elements are uniformly distributed in a circle of radius $a$ is also shown. In contrast to the case where elements are randomly distributed within the circle, this configuration exhibits worse performance and the presence of a \ac{MSE} floor. These results indicate that the enhanced coupling among all elements in the random deployment plays an essential role in the controllability of the \ac{DSA} and enables potential superdirective behavior.

Further numerical examples of \ac{DSA} applied to single- and multi-user \ac{MIMO} tasks, such as \ac{MIMO} precoding at the \ac{EM} level, as well as full-wave simulation validation, can be found in \cite{Dar:J26} and \cite{BenTroMasCosDar:C26}, respectively.


\section{Conclusion}
\label{Sec:Conclusion}

This paper derives fundamental theorems for the controllability of generic reconfigurable \ac{EM} devices. A differential geometry-based theorem provides a sufficient controllability condition for linear \ac{EM} devices under mild transfer matrix assumptions that show that verifying the controllability for a single configuration ensures the full controllability of the \ac{EM} device. Its applicability to generic transfer functions makes it a broadly useful mathematical result.

For the case of single-input \acp{DSA}, we prove compliance with the theorem and derive a closed-form criterion for dense controllability depending on the dimension of the observation space, the number of elements composing the \ac{DSA}, and mutual coupling. This facilitates a rapid assessment of controllability for a given device characterized either analytically, through simulations, or by measurements.

We further discuss the interplay between the number of elements, physical size, \ac{DoF}, and directivity through numerical examples.


\begin{figure}[!t]
\centering\includegraphics[width=1\columnwidth]{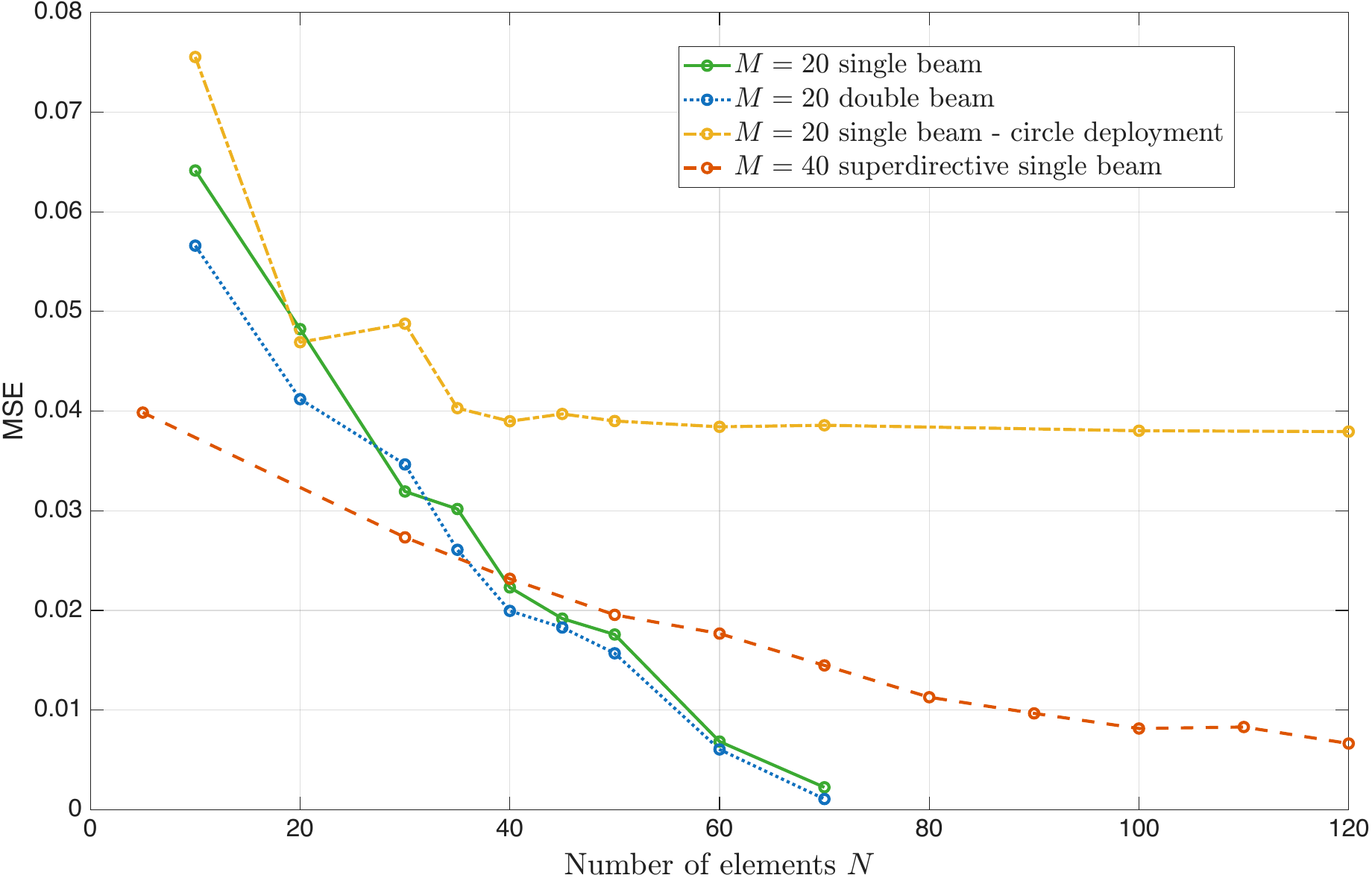}
\caption{Average \ac{MSE} as a function of the number of elements $N$ for the scenarios investigated in Figs. \ref{fig:Pattern}, \ref{fig:Patternbis}, and \ref{fig:PatterSuper} with random deployment, as well as for the elements deployment on a circle.} 
\label{fig:MSE}
\end{figure}

\section*{Acknowledgment}
The author would like to thank Nicol\'o Decarli and Mattia Fabiani for the useful discussions. 

\ifCLASSOPTIONcaptionsoff
\fi
\bibliographystyle{IEEEtran}

\bibliography{IEEEabrv,Biblio/BiblioDD,Biblio/MetaSurfaces,Biblio/EMInformationTheory,Biblio/IntelligentSurfaces,Biblio/MassiveMIMO,Biblio/MIMO,Biblio/THzComm,Biblio/EMTheory,Biblio/WINS-Books,Biblio/Vari}

\end{document}